\tikzstyle{none}=[inner sep=0pt] 
\tikzstyle{simplethree}=[circle,fill=black,draw=black,line width=1.500,minimum size = 2pt,inner sep = 2pt]
\tikzstyle{simple}=[circle,fill=black,draw=black,line width=1.000,minimum size = 2pt,inner sep = 1.5pt]
\tikzstyle{simpletoo}=[circle,fill=black,draw=black,line width=1.000,minimum size = 1pt,inner sep = 1pt]
\newcommand{\Beta}{\text{B}}
\newtheorem{defn}{Definition}
\newtheorem{fact}{Fact}
\newtheorem{comm}{Comment}
\newtheorem{lemma}{Lemma}
\newcommand{\mrm}{\mathrm}
\newcommand{\nn}{\nonumber}
\newcommand{\pd}{\partial}
\newcommand{\lb}{\left(}
\newcommand{\rb}{\right)}
\newcommand{\FF}{\mathcal{F}}
\newcommand{\MM}{\mathcal{M}}
\newcommand{\lsb}{\left[}
\newcommand{\rsb}{\right]}
\newcommand\be{\begin{equation}}
\newcommand\ba{\begin{eqnarray}}
\newcommand\ee{\end{equation}}
\newcommand\ea{\end{eqnarray}}
\numberwithin{equation}{section}
\DeclareMathOperator{\sgn}{sgn}
\newcommand\Qp{{\mathbb{Q}_p}}
\newcommand\Zp{{\mathbb{Z}_p}}
\colorlet{dblue}{blue!70!black}
\newcommand{\arxivold}[1]
  {\href{http://arxiv.org/abs/#1}{#1}}
\newcommand{\arxiv}[1]
  {\href{http://arxiv.org/abs/#1}{arXiv:#1}}
\begin{document}
\begin{spacing}{1.3}
\begin{titlepage}

\ \\
\vspace{-2.3cm}
\begin{center}

\begin{spacing}{2.3}
{\LARGE{From $p$-adic to Archimedean Physics: Renormalization Group Flow and Berkovich Spaces}}
\end{spacing}

\vspace{0.5cm}
An Huang,$^{1}$ Dan Mao,$^{1,3}$ and Bogdan Stoica$^{2}$

\vspace{5mm}

{\small

\textit{
$^1$Department of Mathematics, Brandeis University, Waltham, MA 02453, USA}\\

\vspace{2mm}

\textit{
$^2$Center of Mathematical Sciences And Applications, Harvard University, Cambridge MA 02138, USA }\\

\vspace{2mm}

\textit{
$^3$Department of Mathematics, University of Oxford, Oxford, OX2 6GG, UK}\\

\vspace{4mm}

{\tt anhuang@brandeis.edu, ella1120@brandeis.edu, bstoica@cmsa.fas.harvard.edu}

\vspace{0.5cm}
}

\end{center}

\begin{abstract}
\noindent We introduce the $p$-adic particle-in-a-box as a free particle with periodic boundary conditions in the $p$-adic spatial domain. We compute its energy spectrum, and show that the spectrum of the Archimedean particle-in-a-box can be recovered from the $p$-adic spectrum via an Euler product formula. This product formula arises from a flow equation in Berkovich space, which we interpret as a space of theories connected by a kind of renormalization group flow. We propose that Berkovich spaces can be used to relate $p$-adic and Archimedean quantities~generally.
\end{abstract}

\vspace{4cm}

\begin{center}
\emph{In memory of Steven Gubser.}
\end{center}

\end{titlepage}

\setcounter{tocdepth}{2}
\tableofcontents

\newpage

\section{Introduction}
\label{intro}

A central question in modern $p$-adic physics is how the usual Archimedean world emerges from the non-Archimedean worlds indexed by primes. This question has been of interest since the advent of $p$-adic string theory, starting with the works of Freund and Olson \cite{FreundOlson}, and Freund and Witten \cite{FreundWitten}, which showed that the $p$-adic four-point Veneziano amplitudes for open strings product into the multiplicative  inverse of the Archimedean four-point Veneziano amplitude, at tree level. There is by now a vast library of objects which obey such a product rule; examples include quantum mechanics, field theory and holography propagators, on-shell partition functions in two-dimensional gravity, Green's functions, and so on. However, it has also been known for a long time that many quantities do not obey simple product rules. For instance, \cite{BFOW,adelicNpoint} have shown that the higher point Veneziano amplitudes do not obey a product formula, away from special hyperplanes in the kinematic parameters.

The situation is complicated by the existence in the literature of other ways of relating the $p$-adic and Archimedean worlds. It was noticed in \cite{gershata} that analytically continuing in the prime parameter $p$ and taking the $p\to1$ limit in the effective Lagrangian for the p-adic tachyonic amplitudes produces an Archimedean Lagrangian for an effective tachyonic theory with logarithmic potential, called the Gerasimov-Shatashvili Lagrangian. Naively taking $p\to1$ may seem as an ad-hoc prescription, however \cite{Bocardo-Gaspar:2017atv,bggczg} have shown how to reproduce the four-point and five-point scattering amplitudes of the Gerasimov-Shatashvili Lagrangian from $p$-adic amplitudes, in a rigorous manner, using the $p\to1$ limit, and have conjectured that their procedure works for all $N$-point tree level amplitudes.

In this paper we would like to take the first steps toward systematically understanding the emergence of the Archimedean world from $p$-adic physics, along the lines of the program proposed in \cite{Stoica:2018zmi}. We will do this by studying a simple problem, which is the spectrum of the $p$-adic particle in a box; this is a simple example for which the ideas that will be introduced below can be explained cleanly. In Section \ref{freeparticle} we will solve the spectrum of the $p$-adic particle-in-a-box, and we will find that the Archimedean spectrum can be reconstructed from the $p$-adic one, via a product formula. In Section~\ref{secBerk} we will introduce the Berkovich space $\MM(\mathbb{Z})$, and we will interpret Berkovich spaces in general as the natural settings for a kind of renormalization group flow. While the usual renormalization group flow is with respect to energy scale, the Berkovich flow is with respect to a parameter controlling the norm of the space on which the theory is defined. Along a certain path in Berkovich space this parameter can be thought of as a proxy for energy scale, however in general Berkovich flow is more complicated than flow along a single direction (such as an energy scale). In the case of the space $\MM\lb\mathbb{Z}\rb$, the $p$-adic branches can be thought of as providing the setting for a $p$-adic ultraviolet completion of the Archimedean theory. We furthermore interpret the Euler product obeyed by the spectrum as arising from a flow equation in Berkovich space, and we propose that Berkovich spaces can be used to systematically understand the reconstruction of Archimedean quantities from $p$-adic ones in the sense of \cite{Stoica:2018zmi}. This includes quantities which do not obey simple product formulas, and the $p\to1$ analytic continuation procedure mentioned above. We conclude in Section \ref{sec4} with an analogy between flow in Berkovich space and general covariance in general~relativity.

Other recent works which have recently explored $p$-adic directions include \cite{Gubser:2016guj, Heydeman:2016ldy, Gubser:2016htz, Gubser:2017vgc, Bhattacharyya:2017aly, Gubser:2017tsi, Dutta:2017bja, Gubser:2017qed, Marcolli:2018ohd, Gubser:2018bpe, Gubser:2018yec, Qu:2018ned, Jepsen:2018dqp, Gubser:2018cha, Heydeman:2018qty, Hung:2018mcn, Jepsen:2018ajn, Gubser:2019uyf, Garcia-Compean:2019jvk, Huang:2019nog}.

\section{$p$-adic particle-in-a-box}
\label{freeparticle}

In this section, we will discuss the $p$-adic version of the quantum particle in a box. Physical theories defined on objects such as $\mathbb{R}$ or smooth manifolds are intimately connected to the topology of the ``points'' of these underlying objects. This can be seen straight-up in the way the theories are defined, for instance in the central role derivatives, which depend strongly on topology, play in Hamiltonians and the equations of motion. The punchline to introducing quantum mechanics (or indeed any other type of theory) on an object such as $\Qp$ is that one needs to ``isolate'' the dependence of the physical theory on the point set topology of the underlying object, and then to replace it with a different topology. For quantum mechanics this can be done by introducing an object known as the Vladimirov derivative $\pd_s$, which is the inverse Fourier transform of multiplication by $|k|^s$ in momentum space, with $k$ the momentum (this point of view on the derivative is also common in quantum mechanics on $\mathbb{R}$). The theory on $\Qp$ can then be built in terms of the Vladimirov derivative, in the same way quantum mechanics on $\mathbb{R}$ is built from the usual derivatives.

\subsection{Definitions}

The first task is to introduce Fourier transforms for functions $\psi:\Qp\to \mathbb{C}$. This is done by introducing additive characters on $\Qp$, which are the complex exponentials entering Fourier transforms.
\begin{defn}
An additive character $\chi:\Qp\to\mathbb{C}^\times$ is a function
\be
\chi(x) \coloneqq e^{2\pi i \left\{x \right\}}
\ee
such that $\chi(x+y)=\chi(x)\chi(y)$, with the fractional part of $x$ defined as
\be
\{x\} \coloneqq \sum_{i=i_0}^{-1} x_i p^i, \quad \mrm{where} \quad x = \sum_{i=i_0}^\infty x_i p^i \quad and \quad x_i= 0,\dots,p-1.
\ee
The Fourier transform $\FF$ of a function $\psi:\Qp\to\mathbb{C}$ is given by integrating against the additive character,
\be
\FF\psi(k) = \int_\Qp \psi(x) e^{2\pi i \{kx\}}dx.
\ee 
\end{defn}

\begin{defn}
A multiplicative character $\pi_{s,\tau}:\Qp^\times\to \mathbb{C}^\times$ is a function
\be
\pi_{s,\tau}(x)\coloneqq |x|^s\sgn_\tau x,
\ee 
with $s\in\mathbb{C}$ and $\tau\in\Qp$, such that $\pi_{s,\tau}(x_1x_2)=\pi_{s,\tau}(x_1)\pi_{s,\tau}(x_2)$. Here $\sgn_\tau x$ is the $p$-adic sign function (Hilbert symbol). We will not present the sign function here, as it lies outside the scope of this paper; see for instance \cite{Stoica:2018zmi} and references therein.
\end{defn}

We now introduce the Vladimirov derivative. Since our main interest is in the $p$-adic particle-in-a-box, we will give here only the minimum number of facts required in the rest of the paper. For more information on the Vladimirov derivative see Appendix~\ref{appVladDer}. 

\begin{defn}
The Vladimirov derivative associated to character $\pi$ acting on a test function $\psi$ is
\be
\label{eqDder}
\pd_\pi \psi \coloneqq  \FF^{-1} \pi \FF\psi.
\ee
\end{defn}

\begin{comm} Equation \eqref{eqDder} can be written explicitly as
\be
\label{hereispider}
\pd_\pi \psi = \int \pi(k)\psi(x') e^{2\pi i {k(x'-x)}} dx'dk.
\ee
For $\pi=\pi_{s,\tau}$, this expression has a position space representation as
\be
\label{eqDstau}
\pd_{\pi_{s,\tau}} \psi(x) = \Gamma\lb \pi_{s+1,\tau} \rb \int \frac{\psi(x')-\psi(x)}{|x'-x|^{s+1}}\sgn_\tau \lb x'-x \rb dx'.
\ee
When there is no chance of confusion, we will also denote $\pd^{s,\tau}\coloneqq \pd_{\pi_{s,\tau}}$.
\end{comm}

A rigorous derivation of Eq. \eqref{eqDstau} is presented in \cite{HSYZ}, in the sense of distributions. Depending on the regularization, the second term in Eq. \eqref{eqDstau} that is proportional to $\psi(x)$ can be understood to vanish (see Appendix \ref{appVladDer}).

\begin{lemma}
When $\pi_{s,\tau}\neq \pi_{0,1},\ \pi_{-1,1}$, the Vladimirov derivative acting on an additive character gives
\be
\label{eqimm1}
\pd^{s,\tau}_x \chi(kx) = \sgn_\tau(-1) \pi_{s,\tau}(k) \chi(kx).
\ee
\end{lemma}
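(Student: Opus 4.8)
The plan is to exploit the fact that, by the very definition \eqref{eqDder}, the Vladimirov derivative $\pd_\pi=\FF^{-1}\pi\FF$ is diagonal in the momentum basis, and the additive character $x\mapsto\chi(kx)$ is a $p$-adic plane wave of momentum $k$; hence it must be an eigenfunction, with eigenvalue obtained from $\pi$ evaluated at the appropriate momentum. Concretely, I would first record the two elementary facts needed to identify the Fourier transform of a plane wave: the character identity $\chi(kx)\chi(k'x)=\chi((k+k')x)$, which is immediate from $\{kx\}+\{k'x\}\equiv\{(k+k')x\}\pmod{\mathbb Z}$, and the orthogonality relation $\int_{\Qp}\chi(ax)\,dx=\delta(a)$ for the self-dual Haar measure (a distributional identity). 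Together they give $\FF[\chi(k\cdot)](k')=\int_{\Qp}\chi(kx)\chi(k'x)\,dx=\delta(k+k')$, a Dirac mass supported at $k'=-k$.

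I would then apply the remaining two operations in \eqref{eqDder}. Multiplication by $\pi_{s,\tau}$ in momentum space turns this into $\pi_{s,\tau}(k')\,\delta(k+k')$ — legitimate because the delta is supported at $k'=-k\neq0$, away from the puncture where $\pi_{s,\tau}$ is undefined — and the inverse transform $\FF^{-1}\phi(x)=\int_{\Qp}\phi(k')\,\overline{\chi(k'x)}\,dk'$, whose kernel is exactly the one exhibited in \eqref{hereispider}, evaluates the delta at $k'=-k$:
\be
\pd^{s,\tau}_x\chi(kx)=\int_{\Qp}\pi_{s,\tau}(k')\,\delta(k+k')\,\overline{\chi(k'x)}\,dk'=\pi_{s,\tau}(-k)\,\chi(kx),
\ee
where I used $\overline{\chi(-kx)}=\chi(kx)$. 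Finally, since $|-k|=|k|$ and $\sgn_\tau$ is multiplicative, $\pi_{s,\tau}(-k)=|k|^{s}\,\sgn_\tau(-1)\,\sgn_\tau(k)=\sgn_\tau(-1)\,\pi_{s,\tau}(k)$, which is the asserted eigenvalue.

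Two points remain. First, the hypothesis $\pi_{s,\tau}\neq\pi_{0,1},\pi_{-1,1}$ is what makes the manipulation legitimate at the level of the position-space representation \eqref{eqDstau}: these are precisely the two characters at which the local gamma factor $\Gamma(\pi_{s+1,\tau})$ there degenerates — a zero at $(s,\tau)=(0,1)$ and a pole at $(s,\tau)=(-1,1)$ — so away from them the functional equation relating multiplication by $\pi_{s,\tau}$ on the Fourier side to the singular kernel $|x'-x|^{-s-1}\sgn_\tau(x'-x)$ is nondegenerate. Second, I expect the only genuine difficulty to be the distributional bookkeeping: $\chi(k\cdot)$ is not integrable, $\pi_{s,\tau}$ is itself a homogeneous distribution, and one is multiplying it by a delta and inverse-transforming the result. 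The clean way to discharge this is to test both sides against a Bruhat--Schwartz function, apply Fubini, and use the known Fourier transform of $\pi_{s,\tau}$ (a constant multiple of $\pi_{-s-1,\tau}$, essentially the content of \eqref{eqDstau}); the algebra itself, as above, is a one-liner. An equivalent route avoiding the Fourier side altogether is to insert $\psi(x)=\chi(kx)$ directly into \eqref{eqDstau}, substitute $x'\mapsto x+u$, and evaluate the resulting $p$-adic gamma integral $\int(\chi(ku)-1)\,|u|^{-s-1}\sgn_\tau(u)\,du$, which after cancelling $\Gamma(\pi_{s+1,\tau})$ returns the same eigenvalue $\sgn_\tau(-1)\pi_{s,\tau}(k)$.
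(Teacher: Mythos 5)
Your proposal is correct, but your primary argument runs along a genuinely different route than the paper's. You work entirely on the Fourier side of the definition \eqref{eqDder}: $\FF[\chi(k\cdot)]$ is a delta mass at $-k$, multiplication by $\pi_{s,\tau}$ and inversion then give the eigenvalue $\pi_{s,\tau}(-k)=\sgn_\tau(-1)\pi_{s,\tau}(k)$ directly, with no Gamma functions at all. The paper instead starts from the position-space representation \eqref{eqDstau}, shifts the integration variable, evaluates $\int \pi_{-s-1,\tau}(x')\chi(kx')\,dx'$ via the Fourier transform formula for multiplicative characters (producing $\Gamma(\pi_{-s,\tau})/\pi_{-s,\tau}(k)$), and then invokes the functional equation $\Gamma(\pi_{s+1,\tau})\Gamma(\pi_{-s,\tau})=\sgn_\tau(-1)$ to land on the same eigenvalue; your closing remark about inserting $\chi(kx)$ into \eqref{eqDstau} and doing the gamma integral is essentially this proof. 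What each route buys: your momentum-space computation makes the origin of the $\sgn_\tau(-1)$ factor transparent (it is just evaluation of $\pi_{s,\tau}$ at $-k$) and is shorter, but it is purely distributional --- delta functions multiplied by the singular character $\pi_{s,\tau}$ --- and, as you note, would have to be discharged by testing against Bruhat--Schwartz functions; it also does not by itself show where the excluded characters $\pi_{0,1},\pi_{-1,1}$ enter. The paper's position-space route is exactly where that hypothesis bites (existence of the representation \eqref{eqDstau} and nondegeneracy of the Gamma factors, which you correctly locate as the zero at $\pi_{0,1}$ and pole at $\pi_{-1,1}$ of $\Gamma(\pi_{s+1,\tau})$), at the cost of invoking the character Fourier transform and the functional equation. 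Both computations are sound at the paper's level of rigor.
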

\begin{proof}
Start from the position space representation \eqref{eqDstau}, perform a shift of the integration variable and use the Fourier transform formula for multiplicative characters to~obtain
\ba
\pd^{s,\tau}_x \chi(kx) &=& \Gamma\lb \pi_{s+1,\tau} \rb \int \pi_{-s-1,\tau} (x') \chi\lsb k(x'+x) \rsb dx' \\
 &=& \Gamma\lb \pi_{s+1,\tau} \rb \frac{\Gamma\lb\pi_{-s,\tau}\rb}{\pi_{-s,\tau}(k)} \chi\lb k x\rb.
\ea
The result follows from the functional equation for the Gamma function.
\end{proof}

\subsection{The spectrum at the finite places}

The first task is to write down a $p$-adic Schr\"odinger equation. In \cite{Stoica:2018zmi}  the following expression was proposed,
\be
\label{hereisschrodi}
H \psi(x,t) = \sgn_{\tau}(-1) |h| \pd^{1,\tau}_{t} \ \psi(x,t).
\ee
Here $\psi:\mathbb{Q}_p^2\to\mathbb{C}$ is the wavefunction, and $H$ is the Hamiltonian. Expression \eqref{hereisschrodi} implies that time and space are now $p$-adic, and in \cite{Stoica:2018zmi} it was argued that the sign factor $\sgn_{\tau}(-1)$ is the correct phase factor in order to ensure that the usual Schr\"odinger equation
\be
\label{hereisschrodiarchi}
\frac{1}{2m} \pd^{2}_{x} \ \psi = -\frac{2 \pi i}{h} \ \pd_{t} \ \psi 
\ee 
can be reconstructed from Eq. \eqref{hereisschrodi}, at the Archimedean place, for the free~particle, using the sign convention in Eq. \eqref{Hsignconv} below. Note that constant $h$ here is the un-reduced Planck's constant. 

$p$-adic time evolution is not the same as Archimedean time evolution, and we will not attempt to discuss it here. Rather, we can focus on the \emph{statics}, and investigate the spectrum of the Hamiltonian. The eigenfunction equation for the free particle Hamiltonian is just
\be
H \psi(x) = E \psi(x),
\ee
where we can take the kinetic term to be
\be
\label{Hsignconv}
H = \left|\frac{h^2}{2m} \right| \pd_x^2.
\ee
Here $\pd_x^2\coloneqq\pd_x^{2,\tau=1}$ is a Vladimirov derivative. As explained in \cite{Stoica:2018zmi}, the $p$-adic norms on unitful quantities should be understood in the sense that when plugged in the Schr\"odinger equation, combinations of unitful quantities can always be constructed such that the norms are applied only to unitless objects. In the case of quantum mechanics, this is the same as pretending that all quantities are unitless. 

We now explain what we mean by particle in a box. In the usual Archimedean quantum mechanics, there are two ways to think about the particle being in a box: (i) an infinite potential outside a certain spatial region, or (ii) periodic boundary conditions. A version of the $p$-adic infinite potential well, which does not appear to be immediately related to the discussion in the present paper, has been considered in \cite{leroy}. In this paper we will only consider option (ii), and impose periodic boundary conditions on our wavefunction. If the box is of size $T\in\Qp$, this means that we must demand
\be
\label{condperiod}
\psi(x+2T) = \psi(x),
\ee
where $\psi:\Qp\to\mathbb{C}$ is the wavefunction. This is because in the Archimedean case the wavefunctions are proportional to $\sin \lb n\pi x/T\rb$ for a box of size $T$, so they repeat after period $2T$.

The wavefunction $\psi(x)$ of the particle can be expanded in Fourier modes as
\be
\psi(x) = \int c_{k} \chi(kx) dk,
\ee
where $c_{k}$ are the Fourier coefficients and $k\in\Qp$. The periodicity condition \eqref{condperiod} enforces that only certain modes are allowed. This is encapsulated in Lemma \ref{lemmafmodes} below.

\begin{lemma}
\label{lemmafmodes}
Periodicity condition \eqref{condperiod} implies that the Fourier coefficients obey
\be
\label{immeq2}
c_k = \begin{cases}
0 \hspace{1.815cm} \mrm{if} \quad 2kT \notin \Zp\\
\mrm{arbitrary} \quad \mrm{if} \quad 2kT \in \Zp
\end{cases}.
\ee
\end{lemma}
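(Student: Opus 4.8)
The plan is to substitute the Fourier expansion of $\psi$ directly into the periodicity condition and exploit multiplicativity of the additive character. First I would compute, using $\chi(a+b)=\chi(a)\chi(b)$,
\be
\psi(x+2T) = \int c_k\, \chi\big(k(x+2T)\big)\, dk = \int c_k\, \chi(2kT)\, \chi(kx)\, dk.
\ee
Imposing \eqref{condperiod} and subtracting the Fourier expansion of $\psi(x)$ gives $\int c_k\big(\chi(2kT)-1\big)\chi(kx)\, dk = 0$ for all $x\in\Qp$. Recognizing the left-hand side as (the inverse Fourier transform of) the function $k\mapsto c_k\big(\chi(2kT)-1\big)$, injectivity of the $p$-adic Fourier transform forces this function to vanish, i.e. $c_k\big(\chi(2kT)-1\big)=0$ for almost every $k$. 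Hence $c_k=0$ unless $\chi(2kT)=1$.

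The second step is to characterize the condition $\chi(2kT)=1$. Since $\chi(y)=e^{2\pi i\{y\}}$ with $\{y\}\in[0,1)$ by the definition of the fractional part (the finite sum $\sum_{i=i_0}^{-1}y_ip^i$ is bounded above by $\sum_{i=i_0}^{-1}(p-1)p^i<1$), we have $\chi(y)=1$ if and only if $\{y\}=0$, which happens if and only if the $p$-adic expansion of $y$ has no terms of negative order, i.e. if and only if $y\in\Zp$. Taking $y=2kT$ yields precisely the dichotomy in \eqref{immeq2}: the constraint $c_k=0$ is forced when $2kT\notin\Zp$, and there is no constraint on $c_k$ when $2kT\in\Zp$.

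I expect the only subtle point to be the middle step — passing from ``$\int c_k\big(\chi(2kT)-1\big)\chi(kx)\,dk=0$ for all $x$'' to the pointwise vanishing of $c_k\big(\chi(2kT)-1\big)$ — which relies on injectivity of the Fourier transform on the space of wavefunctions under consideration (for instance Bruhat--Schwartz functions, or an appropriate $L^2$ completion); this is standard but worth pinning down the relevant function class. The character identity and the description of the kernel of $\chi$ are immediate from the definitions, so no further work is needed there.
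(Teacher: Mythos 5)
Your proposal is correct and follows essentially the same route as the paper: substitute the Fourier expansion, use additivity of $\chi$, and conclude $c_k\lb\chi(2kT)-1\rb=0$, then note that $\chi(2kT)=1$ iff the fractional part vanishes iff $2kT\in\Zp$. The only cosmetic difference is that you invoke injectivity of the Fourier transform where the paper projects explicitly by multiplying by $\chi(k'x)$ and using the orthogonality relation $\int \chi\lsb(k-k')x\rsb dx=\delta(k-k')$ --- these are the same mechanism.
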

\begin{proof}
The condition $\psi(x+2T) = \psi(x)$ means that
\be
\int c_k \chi \lsb k \lb x+2T \rb\rsb dk = \int c_k \chi \lb k x \rb dk.
\ee
Multiplying this relation by $\chi\lb k' x \rb$, integrating in $x$ and using the orthogonality relation $\int \chi\lsb \lb k-k' \rb x\rsb dx=\delta\lb k-k' \rb$, we obtain
\be
c_{k'} \ e^{2\pi i \left\{2k'T \right\}} = c_{k'}.
\ee
So either $e^{2\pi i \left\{2k'T \right\}}=1$, or $c_{k'}=0$. The fractional part of a $p$-adic number is zero iff that number is a $p$-adic integer, and the result follows.
\end{proof}

Finally, we are in a position to characterize the spectrum of the $p$-adic particle-in-a-box at a finite place. 

\begin{lemma}
The energy eigenfunctions of the particle in a box of size $T$ (period $2T$) are $\chi\lb kx \rb$, and the spectrum is given by
\be
\label{hereisspectrum}
E(k) = \left| \frac{ h^2 k^2}{2m}\right|,
\ee
with $k\in\Qp$ such that $2kT\in\Zp$.
\end{lemma}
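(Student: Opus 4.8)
The plan is to verify by a short computation that every admissible additive character is an $H$-eigenfunction with the claimed eigenvalue, and then to use the Fourier expansion together with Lemma~\ref{lemmafmodes} to conclude that these exhaust the eigenfunctions. First I would apply $H = |h^2/(2m)|\,\pd_x^2$ to $\chi(kx)$, recalling that $\pd_x^2\coloneqq\pd_x^{2,\tau=1}$ is the single Vladimirov derivative attached to the multiplicative character $\pi_{2,1}$. Since $\pi_{2,1}$ is neither $\pi_{0,1}$ nor $\pi_{-1,1}$, the Lemma yielding \eqref{eqimm1} applies with $s=2$ and $\tau=1$, giving $\pd_x^{2,1}\chi(kx) = \sgn_1(-1)\,\pi_{2,1}(k)\,\chi(kx)$. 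Because $\sgn_1$ is the trivial sign we have $\sgn_1(-1)=1$ and $\pi_{2,1}(k) = |k|^2\sgn_1(k) = |k|^2$, so $\pd_x^{2,1}\chi(kx) = |k|^2\chi(kx)$ and hence $H\chi(kx) = |h^2/(2m)|\,|k|^2\,\chi(kx) = |h^2 k^2/(2m)|\,\chi(kx)$, which is precisely \eqref{hereisspectrum}.

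For the converse I would expand a general wavefunction as $\psi(x) = \int c_k\,\chi(kx)\,dk$; Lemma~\ref{lemmafmodes} forces $c_k$ to be supported on $\{k\in\Qp : 2kT\in\Zp\}$, and on that set each $\chi(kx)$ is an $H$-eigenfunction with eigenvalue $E(k) = |h^2 k^2/(2m)|$. Since $H$ therefore acts diagonally in the Fourier basis, the energy eigenfunctions are exactly the admissible characters $\chi(kx)$ (together with superpositions taken over a fixed value of $|k|$, a degeneracy which does not change the set of eigenvalues), and the spectrum is $\{|h^2 k^2/(2m)| : 2kT\in\Zp\}$. I would also note in passing that $2kT\in\Zp$ bounds $|k|$ from above, so this spectrum is discrete and accumulates only at $0$.

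The calculation is short, so the points that need care are bookkeeping rather than genuine obstacles. I would make explicit that $\pd_x^2$ is to be read as $\pd_x^{2,\tau=1}$ and not as $(\pd_x^{1,\tau})^2$, which is what allows \eqref{eqimm1} to be used verbatim, and that the $p$-adic norm in $H$ is, by the convention stated just before \eqref{condperiod}, applied to the unitless combination built from $h$, $m$ and $k$, so that $E(k) = |h^2 k^2/(2m)|$ is well defined. The one conceptual point I would want spelled out for the reader is that it is the periodicity condition \eqref{condperiod}, routed through Lemma~\ref{lemmafmodes}, that cuts the continuum of $p$-adic plane waves down to the constrained family with $2kT\in\Zp$, and that diagonality of $H$ in this basis guarantees no eigenfunctions are lost when the constraint is imposed.
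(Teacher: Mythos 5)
Your proposal is correct and follows essentially the same route as the paper, which simply cites the eigenvalue relation \eqref{eqimm1} with $s=2$, $\tau=1$, the mode restriction \eqref{immeq2}, and the Hamiltonian convention \eqref{Hsignconv}; you have merely spelled out the short computation the paper leaves implicit. The added remarks (that $\pd_x^2$ means $\pd_x^{2,\tau=1}$, and that the spectrum is bounded by $2kT\in\Zp$) are accurate clarifications, not deviations.
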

\begin{proof}
Immediate from Equations \eqref{eqimm1}, \eqref{immeq2}, and definition \eqref{Hsignconv}.
\end{proof}

Note that in order to obtain the spectrum it was not necessary to impose any boundary conditions on the wavefunction, only periodicity. 

\pagebreak
\subsection{Product formula for the spectrum}

Remarkably, Eq. \eqref{hereisspectrum} can be used to reconstruct the spectrum of the Archimedean particle-in-a-box in one dimension, in the sense of \cite{Stoica:2018zmi}. Remembering that Eq.~\eqref{hereisspectrum} is at \emph{one} finite place, but with all places in mind, we restrict to $h,k,m\in\mathbb{Q}_{+}$, and we rewrite Eq. \eqref{hereisspectrum} as
\be
E_{(p)}(k) = \left| \frac{ h^2 k^2}{2m}\right|_{(p)}
\ee
where the dependence on place has now been made explicit. The condition $2kT~\in~\Zp$ can be recast as
\be
|2kT|_{(p)} = p^{q_{(p)}}, 
\ee
with $q_{(p)}\leq 0$ an integer. Then the energy at the finite place is just
\be
E_{(p)} = p^{2q_{(p)}} \left| \frac{ h^2 }{8mT^2}\right|_{(p)}.
\ee
If we define an Archimedean quantity by the Euler product as
\be
\label{Eprodformula}
E_{(\infty)} \coloneqq \prod_p E_{(p)}^{-1},
\ee
we thus obtain
\be
E_{(\infty)} =\frac{ h^2 }{8mT^2} \prod_p p^{-2q_{(p)}}.
\ee
Remembering that each $q_{(p)}$ obeys $q_{(p)}\leq 0$ and assuming that only a finite number of primes enter the product with nontrivial exponents, this is the same as
\be
E_{(\infty)} =\frac{ n^2 h^2 }{8mT^2},
\ee
with $n$ a positive integer. We have recovered the spectrum of the one dimensional quantum particle-in-a-box.

\section{Archimedean from $p$-adic: Flow in the Berkovich space~$\MM\lb \mathbb{Z}\rb$}
\label{secBerk}

In this section we will propose an interpretation for the product formula \eqref{Eprodformula}, in terms of a mathematical object called \emph{Berkovich space}.

A priori, it may seem strange that the spectrum of the particle-in-a-box should obey an adelic product. In fact, in the literature there are currently two known ways, both empirical, to relate $p$-adic and Archimedean objects: (i) adelic products and (ii) the $p\to1$ limit. In this section we would like to propose Berkovich spaces as the mathematical arena in which both of these approaches can be systematically~understood.

In mathematics, Berkovich spaces can be thought of as spaces of seminorms, both $p$-adic and Archimedean. We will not give a mathematical overview of these spaces here, see instead \cite{baker,jonsson}.  Rather, we will focus on the physical aspects. We propose that from a physicist's point of view, Berkovich spaces can be thought of as spaces of theories, together with the domains on which the theories are defined. The Berkovich spaces thus encode a flow of theories. We furthermore propose that this flow should be interpreted as a kind of renormalization group flow.

\subsection{A physicist's review: The Berkovich space $\MM(\mathbb{Z})$}

In this section we will only be concerned with $\MM\lb \mathbb{Z} \rb$, which is a rigid analytic space associated to the normed ring $(\mathbb{Z},|\cdot|_\infty)$, with $|\cdot|_\infty$ the usual absolute value norm. We leave the exploration of other Berkovich spaces in the context of physics for future~work. We will be schematic in our presentation; for more details see e.g. the notes by Baker \cite{baker}, which we will mostly follow, or \cite{jonsson}.

\begin{figure}[htp]
\centering
\begin{tikzpicture}
\tikzset{VertexStyle/.style = simple}
\Vertex[L=$ $,x=0,y=0,]{u0}
\Vertex[L=$ $,x=0,y=3.5]{u1}
\Vertex[L=$ $,x=0,y=1.75]{u2}
\Vertex[L=$ $,x=-3,y=-3]{u3}
\Vertex[L=$ $,x=4.4,y=-3]{u4}
\Vertex[L=$ $,x=-0.6,y=-3]{u5}
\Vertex[L=$ $,x=-1.5,y=-1.5]{u6}
\Vertex[L=$ $,x=-0.3,y=-1.5]{u7}
\Vertex[L=$ $,x=2.2,y=-1.5]{u8}
\Vertex[L=$ $,x=1.1,y=-0.75]{u8p}
\Edge[](u0)(u1)
\Edge[](u0)(u2)
\Edge[](u0)(u3)
\Edge[](u0)(u4)
\Edge[](u0)(u5)
\tikzset{VertexStyle/.style = none}
\Vertex[L=$|\cdot|_3$,x=0.35,y=-1.5]{l0}
\Vertex[L=$|\cdot|_2$,x=-2.6+0.45,y=-1.5]{l1}
\Vertex[L=$|\cdot|_p$,x=3.35-0.45,y=-1.5]{l2}
\Vertex[L=$|\cdot|^\epsilon_p$,x=2.3-0.45,y=-0.75]{l2p}
\Vertex[L=$|\cdot|_{3,\infty}$,x=0.7-0.45,y=-3]{l3}
\Vertex[L=$|\cdot|_{2,\infty}$,x=-4.3+0.5,y=-3]{l4}
\Vertex[L=$|\cdot|_{p,\infty}$,x=5.7-0.45,y=-3]{l5}
\Vertex[L=$\dots$,x=2.8,y=-3]{l6}
\Vertex[L=$|\cdot|_0$,x=-0.65,y=0]{l7}
\Vertex[L=$|\cdot|_{\infty}^\epsilon$,x=1.1-0.4,y=1.75]{l8}
\Vertex[L=$|\cdot|_{\infty}$,x=1.1-0.4,y=3.5]{l9}
\Vertex[L=$\epsilon$,x=0.4,y=0.875]{l8e}
\Vertex[L=$\epsilon$,x=0.75,y=-0.1]{l8f}
\Vertex[L=$ $,x=0.18,y=1.75]{l10}
\Vertex[L=$ $,x=0.18,y=0]{l11}
\Vertex[L=$ $,x=0.23,y=+0.02]{l11p}
\Vertex[L=$ $,x=1.17,y=-0.62]{l12}
\Edge[style=->](l11)(l10)
\Edge[style=->](l11p)(l12)
\end{tikzpicture}
\caption{The Berkovich space $\MM\lb\mathbb{Z}\rb$ is a space of seminorms, applied to elements $x\in\mathbb{Z}$.}
\label{fig1}
\end{figure}
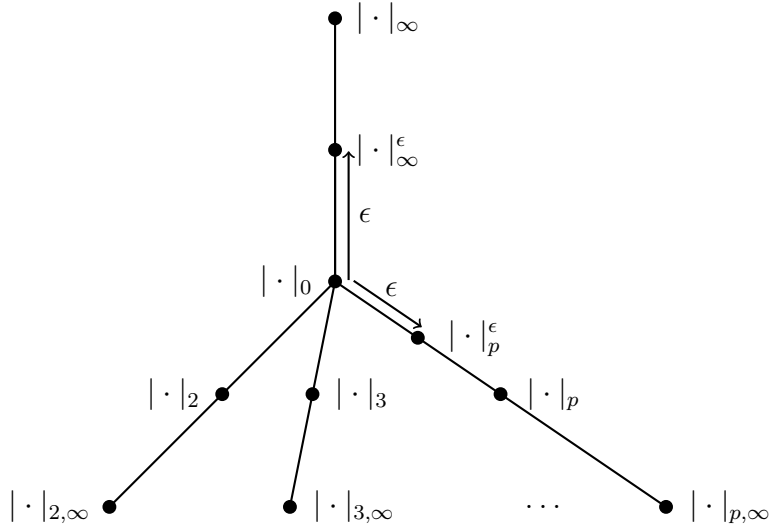

The Berkovich space $\MM(\mathbb{Z})$ is represented in Figure \ref{fig1}. It has the structure of a vertex $|\cdot|_0$, out of which an infinite number of branches emanate. One branch corresponds to the Archimedean direction, and the other branches correspond to the $p$-adic directions; each point in $\MM\lb \mathbb{Z} \rb$ is a seminorm. 

We will now explain this construction.

\begin{defn}
Let $\mathbb{K}$ be a ring. A seminorm $|\cdot|$ is a function $|\cdot|:\mathbb{K}\to\mathbb{R}_{\geq0}$ with the following two~properties:
\begin{enumerate}
\item $|x+y|\leq |x|+|y|$.
\item $|xy|=|x||y|$.
\end{enumerate}
\end{defn}

\begin{comm}
Any norm is also a seminorm. A norm obeys the extra property that if $|x|=0$ then $x=0$.
\end{comm}

The seminorms entering the space $\MM(\mathbb{Z})$ are the following (here $x\in \mathbb{Z}$):
\begin{enumerate}
\item The trivial norm $|\cdot|_0$, defined as
\be
|x|_0\coloneqq \begin{cases}
0  \quad \mrm{if}\quad  x = 0 \\
1 \quad \mrm{if}\quad x\neq 0
\end{cases}.
\ee
\item The Archimedean norms $|\cdot|_\infty^\epsilon$, with $0<\epsilon\leq1$.
\item The $p$-adic norms $|\cdot|_p^\epsilon$, with $0<\epsilon<\infty$.
\item The $p$-trivial seminorms  $|\cdot|_{p,\infty}$, defined as
\be
|x|_{p,\infty} \coloneqq \begin{cases}
0 \quad \mrm{if} \quad p\hspace{1.15mm} |\,\hspace{0.5mm} n \\
1 \quad \mrm{if} \quad p\nmid n
\end{cases}.
\ee
\end{enumerate}

\begin{lemma}
\label{lemma4}
For $x\in\mathbb{Z}$, the seminorms above obey the limits
\begin{enumerate}
\item $\lim_{\epsilon\to0}|x|^\epsilon_\infty=|x|_0$.
\item $\lim_{\epsilon\to0}|x|^\epsilon_p=|x|_0$.
\item $\lim_{\epsilon\to\infty}|x|^\epsilon_p=|x|_{p,\infty}$.
\end{enumerate}
\end{lemma}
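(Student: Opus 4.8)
The plan is to verify each of the three limits directly from the definitions of the seminorms, treating the cases $x=0$ and $x\neq 0$ separately where needed. All three statements are elementary pointwise limits, so the proof is essentially a short computation; there is no real obstacle, only bookkeeping.

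First I would dispose of the case $x=0$: every seminorm in sight sends $0$ to $0$, so $|0|^\epsilon_\infty = |0|^\epsilon_p = 0 = |0|_0 = |0|_{p,\infty}$ identically in $\epsilon$, and all three limits hold trivially. So assume $x\neq 0$ throughout the rest. For item 1, write $|x|^\epsilon_\infty = |x|_\infty^\epsilon$ where $|x|_\infty\geq 1$ is a fixed positive real (since $x$ is a nonzero integer); then $\lim_{\epsilon\to 0}|x|_\infty^\epsilon = 1 = |x|_0$ because $a^\epsilon\to 1$ as $\epsilon\to 0$ for any fixed $a>0$. For item 2, the same argument applies verbatim: $|x|_p$ is a fixed positive real of the form $p^{-v_p(x)}$ with $v_p(x)\geq 0$ the $p$-adic valuation, so $\lim_{\epsilon\to 0}|x|_p^\epsilon = 1 = |x|_0$.

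For item 3, I would split on whether $p \mid x$. If $p\nmid x$ then $v_p(x)=0$, so $|x|_p = 1$ and hence $|x|_p^\epsilon = 1$ for all $\epsilon$, giving $\lim_{\epsilon\to\infty}|x|_p^\epsilon = 1 = |x|_{p,\infty}$. If $p\mid x$ then $v_p(x)\geq 1$, so $|x|_p = p^{-v_p(x)} < 1$, and therefore $|x|_p^\epsilon = p^{-\epsilon v_p(x)} \to 0$ as $\epsilon\to\infty$, matching $|x|_{p,\infty}=0$. (Note the paper's definition of $|\cdot|_{p,\infty}$ is stated in terms of a divisibility condition on ``$n$''; I would read $n$ as $x$, the element being normed, as is clear from context.) This exhausts all cases and completes the proof. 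The only thing to be slightly careful about is that for $x\neq 0$ the relevant base in each case is genuinely positive and, in item 3, genuinely less than $1$ precisely when $p\mid x$ — but this is immediate from basic properties of the $p$-adic and real absolute values on $\mathbb{Z}$, so no heavy machinery is needed.
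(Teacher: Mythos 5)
Your proof is correct and is exactly the routine verification the paper has in mind: its own proof consists of the single word ``Immediate,'' and your case analysis ($x=0$ versus $x\neq 0$, and $p\mid x$ versus $p\nmid x$ for the third limit) just spells out the details. Nothing further is needed.
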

\begin{proof}
Immediate.
\end{proof}

The structure of the Berkovich space $\MM(\mathbb{Z})$ represented in Figure \ref{fig1} arises from Lemma \ref{lemma4}.

\begin{comm}
Lemma \ref{lemma4} does not hold if we demand $x\in\mathbb{Q}$. In this case, one can instead consider the Berkovich projective line (see \cite{baker}). We will remark more on the physical interpretation of the Berkovich projective line in Section \ref{sec32} below.
\end{comm}

\subsection{$\MM(\mathbb{Z})$ as renormalization group~flow}
\label{sec32}

We now interpret the Berkovich space $\MM\lb\mathbb{Z}\rb$ as renormalization group flow. In this subsection only we will take $h/(2m)=2T=1$, so  that the energy at a place $v$ ($p$-adic or Archimedean) is
\be
E_{(v)}= |k|^2_{(v)}.
\ee
Since we are interested in values of $k$ valid for all places, and remembering the periodicity condition $|2kT|_{(p)}=1$, we can take $k\in\mathbb{Z}$, so that $\MM\lb\mathbb{Z}\rb$ is the correct object to consider. Note however that restricting to integers, although sufficient for the purpose of the present paper, is likely unnecessarily  restrictive when considering flows associated to Euler products such as in Eq. \eqref{Eprodformula}, and can probably be relaxed. We will come back to this point in Section \ref{secdisc} below.

\begin{figure}[htp]
\centering
\begin{tikzpicture}
\tikzset{VertexStyle/.style = simple}
\Vertex[L=$ $,x=0,y=0,]{u0}
\Vertex[L=$ $,x=0,y=3.5]{u1}
\Vertex[L=$ $,x=0,y=1.75]{u2}
\Vertex[L=$ $,x=-3,y=-3]{u3}
\Vertex[L=$ $,x=4.4,y=-3]{u4}
\Vertex[L=$ $,x=-0.6,y=-3]{u5}
\Vertex[L=$ $,x=-1.5,y=-1.5]{u6}
\Vertex[L=$ $,x=-0.3,y=-1.5]{u7}
\Vertex[L=$ $,x=2.2,y=-1.5]{u8}
\Vertex[L=$ $,x=1.1,y=-0.75]{u8p}
\Edge[](u0)(u1)
\Edge[](u0)(u2)
\Edge[](u0)(u3)
\Edge[](u0)(u4)
\Edge[](u0)(u5)
\tikzset{VertexStyle/.style = none}
\Vertex[L=$\lb\mathbb{Q}_3{,}\,|\cdot|_3\rb$,x=0.8,y=-1.5]{l0}
\Vertex[L=$\lb\mathbb{Q}_2{,}\,|\cdot|_2\rb$,x=-2.6,y=-1.5]{l1}
\Vertex[L=$\lb\mathbb{Q}_p{,}\,|\cdot|_p\rb$,x=3.35,y=-1.5]{l2}
\Vertex[L=$\lb\mathbb{Q}_p{,}\,|\cdot|^\epsilon_p\rb$,x=2.3,y=-0.75]{l2p}
\Vertex[L=$\lb\mathbb{Q}_3{,}\,|\cdot|_{3,\infty}\rb$,x=0.7,y=-3]{l3}
\Vertex[L=$\lb\mathbb{Q}_2{,}\,|\cdot|_{2,\infty}\rb$,x=-4.3,y=-3]{l4}
\Vertex[L=$\lb\mathbb{Q}_p{,}\,|\cdot|_{p,\infty}\rb$,x=5.7,y=-3]{l5}
\Vertex[L=$\dots$,x=2.8,y=-3]{l6}
\Vertex[L=$|\cdot|_0$,x=-0.65,y=0]{l7}
\Vertex[L=$\lb\mathbb{R}{,}\,|\cdot|_{\infty}^\epsilon\rb$,x=1.1,y=1.75]{l8}
\Vertex[L=$\lb\mathbb{R}{,}\,|\cdot|_{\infty}\rb$,x=1.1,y=3.5]{l9}
\Vertex[L=$\epsilon$,x=0.4,y=0.875]{l8e}
\Vertex[L=$\epsilon$,x=0.75,y=-0.1]{l8f}
\Vertex[L=$ $,x=0.18,y=1.75]{l10}
\Vertex[L=$ $,x=0.18,y=0]{l11}
\Vertex[L=$ $,x=0.23,y=+0.02]{l11p}
\Vertex[L=$ $,x=1.17,y=-0.62]{l12}
\Edge[style=->](l11)(l10)
\Edge[style=->](l11p)(l12)
\end{tikzpicture}
\caption{The Berkovich space as a space of theories.}
\label{fig2}
\end{figure}
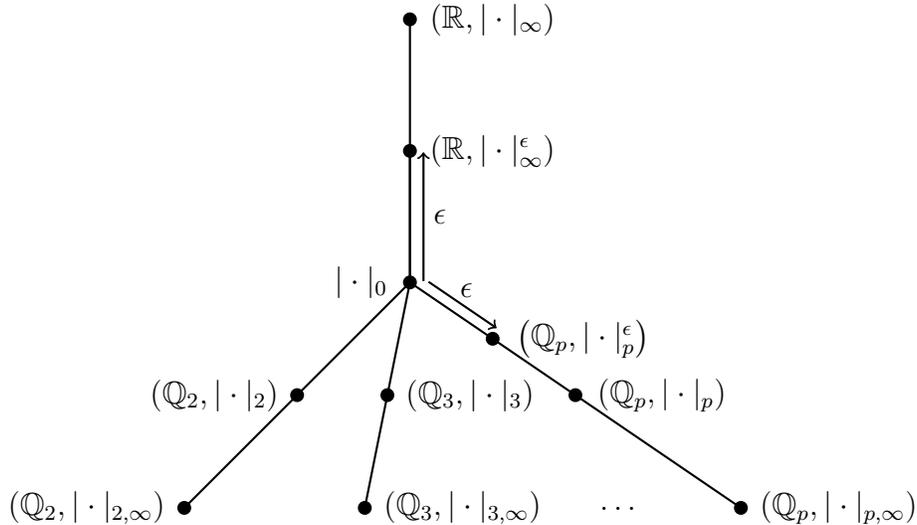

The physicists' interpretation of Berkovich space is in Figure \ref{fig2}. In this interpretation we \emph{augment} the Berkovich space from a space of norms to a space of \emph{theories}. That is, at each point we have a $p$-adic or Archimedean theory, defined on $\mathbb{R}$ or $\mathbb{Q_p}$ equipped with a seminorm. The Berkovich space thus encodes a flow of theories.

\subsubsection{Norms as energy scales}

Consider a point on a $p$-adic or Archimedean branch $v$ of $\MM(\mathbb{Z})$, parameterized by $\epsilon$, as in Figure \ref{fig2}. Since moving along the branch corresponds to scaling the norm, we associate an energy to this point as 
\be
E_{(v)}(\epsilon)= |k|_{(v)}^{2\epsilon}.
\ee
Redoing the analysis of Section \ref{freeparticle}, this energy is an eigenenergy of the particle-in-a-box, with all norms appropriately scaled by $\epsilon$. This motivates interpreting the Berkovich space as a space of theories. In the rest of this section we will focus mostly on the flow of the energy $E_{(v)}(\epsilon)$. Understanding directly how the theories flow in Berkovich space is important, and will have many powerful applications, however we will not consider this question more in the rest of the paper.

Now consider the Archimedean branch $v=\infty$. As we move away from the point $|\cdot|_{(\infty)}$, the parameter $\epsilon$ decreases. We can think of this as a passive transformation of the energy scale, and the corresponding active transformation is increasing the value of $|k|_{(\infty)}^2$. We thus interpret moving along the $v=\infty$ branch of $\MM\lb\mathbb{Z}\rb$ from $|\cdot|_{(\infty)}$ toward $|\cdot|_{(0)}$ as moving toward the ultraviolet. Flowing along the branch $v=\infty$ is therefore a kind of renormalization group flow. This flow continues until we reach $|\cdot|_{(0)}$, at which point the flow splits into separate branches, one for each prime. These branches, and their Lagrangians, can be thought of as providing a $p$-adic ultraviolet completion to the particle-in-a-box.

Of course, the Archimedean particle in a box is a free theory in quantum mechanics, and as such the physics doesn't change with the energy scale. In the present language this simply means that the theory scales trivially along the $v=\infty$ branch, with no interesting dynamics. Mathematically (i.e. assuming that periodic boundary conditions can be imposed at any energy scale), the theory is well-defined at any energy. Nonetheless, the $p$-adic branches can still be used to provide a $p$-adic ultraviolet completion, even for a free theory.

We should also remark on the endpoints of the flow along the $p$-adic branches. One may naively assume the the flow ends at the $p$-adic theories, i.e. the points $|\cdot|^{\epsilon=1}_{(p)}$, however this is not correct. Rather, the flow continues all the way to the trivial seminorms $|\cdot|_{p,\infty}$. We will not attempt to define theories on $\lb \Qp, |\cdot|_{p,\infty} \rb$ in this paper, but it would be interesting to do so.

\subsubsection{Flow equation for the energy: first order analysis}

Let's now write down a flow equation for the renormalization group flow for the energy eigenvalues $E_{(v)}(\epsilon)$, on the Berkovich space. Since the particle-in-a-box is a free theory, the most natural object to consider is an equation of motion involving the Laplacian on $\MM\lb\mathbb{Z}\rb$. We will not attempt to derive this equation from first principles here (and indeed the question of whether this equation of motion should be derived, or is more akin to a definition of how the Archimedean world emerges from the $p$-adic ones, will be left for future work).

Let's consider the equation
\be
\label{lap1}
\Delta_{(1)} E(x) = 0,
\ee
at a point $x\in\MM(\mathbb{Z})$, where $\Delta$ is the Laplacian on the Berkovich space, and the subscript controls the order of expansion in the Laplacian, in a way that will be made precise immediately below. The Laplacian involves a sum over the branches neighboring point $x$, so that if $x= |\cdot|_{(v)}^{\epsilon_0}$, then Eq. \eqref{lap1} is shorthand for
\be 
\label{eqq36}
\lim_{\epsilon\to0} \frac{E_{(v)}(\epsilon_0+\epsilon)+E_{(v)}(\epsilon_0-\epsilon)-2E_{(v)}(\epsilon_0)}{\epsilon} = 0,
\ee
and at the center $x=|\cdot|_0$ Eq. \eqref{lap1} means (the sum is over all places, i.e. branches)
\be
\label{eqq37}
\lim_{\epsilon\to0} \sum_{v} \frac{E_{(v)}\lb\epsilon\rb - E\lb|\cdot|_0\rb}{\epsilon} = 0.
\ee
The subscript $(1)$ on the Laplacian (and corresponding factor of $\epsilon$, not $\epsilon^2$, in the denominator) means that, for now, we will be expanding the energies to first order in~$\epsilon$. 

Expanding $E_{(v)}$ in a power series around $\epsilon_0$ on any of the branches, it is immediate that Equation \eqref{eqq36} is trivially satisfied by the solution
\be
\label{eqsol38}
E_{(v)}(\epsilon) = |k|_{(v)}^{2\epsilon}. 
\ee
The fact that Eq. \eqref{eqq36} does not constrain solution \eqref{eqsol38}, to this order in $\epsilon$, can be seen as another restatement of the fact that the evolution along the branches is trivial (i.e. pure scaling). Note also that because we are for now truncating the order in the Laplacian, Eq. \eqref{eqq36} does not fully fix the solution \eqref{eqsol38}, since all the higher $\epsilon$ terms in the expansion of $E_{(v)}(\epsilon)$ get dropped; we will remedy this below.

Consider now the expansion of Eq. \eqref{eqq37} around the center $|\cdot|_0$. Dropping the higher terms as $\epsilon\to0$, and plugging in solution \eqref{eqsol38}, Eq. \eqref{eqq37} becomes
\be
\sum_v \ln |k|_{(v)}=0,
\ee
i.e. we recover the product
\be
\prod_v |k|_{(v)}=1.
\ee
Thus, the content of Eq. \eqref{lap1} at the point $|\cdot|_{(0)}$ is the same as the Euler product for the norms which allows reconstructing the Archimedean energy according to Eq.~\eqref{Eprodformula}.

\subsubsection{The full flow equation}
The argument at leading order in $\epsilon$ in the subsection above captures the essence of how Eq. \eqref{lap1} encodes the product formula.  Let's now go beyond leading order in~$\epsilon$, and trace the flow from the theories defined on $\lb \Qp,|\cdot|_{(p)} \rb$ to the theory at $\lb \mathbb{R},|\cdot|_{(\infty)} \rb$.

We write down the Berkovich flow equation as
\be
\Delta E(x) = \lambda^2 E(x),
\ee
where $\Delta$ is now the Laplacian to second order in $\epsilon$, and $\lambda$ a parameter that will be introduced shortly. Spelling out this equation at a point $|\cdot|^{\epsilon_0}_{(v)}$ on a branch $(v)$ we have
\be
\label{eq312}
\lim_{\epsilon\to0} \frac{E_{(v)}(\epsilon_0+\epsilon)+E_{(v)}(\epsilon_0-\epsilon)-2E_{(v)}(\epsilon_0)}{\epsilon^2} = \lambda_{(v)}^2 E_{(v)}\lb \epsilon_0 \rb,
\ee
and at point $|\cdot|_0$ we have
\be
\lim_{\epsilon\to0} \sum_{v} \frac{E_{(v)}\lb\epsilon\rb - E\lb|\cdot|_0\rb}{\epsilon^2} = \lambda_{|\cdot|_0}^2 E\lb|\cdot|_0\rb.
\ee
For the values of the parameters $\lambda$ we pick
\be
\lambda_{(v)} = \ln |k|^2_{(v)}
\ee
on a branch $(v)$, and 
\be
\lambda^2_{|\cdot|_{(0)}} = \frac{1}{2} \sum_v \lb \ln |k|_{(v)}^2 \rb^2
\ee
at the center point.

Expanding in $\epsilon$ and then renaming $\epsilon_0$ to $\epsilon$, equation \eqref{eq312} is simply
\be
\label{eq316}
\pd^2_\epsilon E(\epsilon) = \lambda^2_{(v)} E(\epsilon).
\ee
Consider now the flow along a $p$-adic branch. Imposing the boundary condition at $\lb\Qp,|\cdot|\rb$ that $E_{(p)}(\epsilon)=|k|_{(p)}^2$, and demanding that the energy does not diverge toward the $p$-adic ultraviolet $\lb \Qp,|\cdot|_{p,\infty}\rb$ as $\epsilon\to\infty$, the flow along the $p$-adic branches is now uniquely determined by Eq. \eqref{eq316} as
\be
E_{(p)}(\epsilon) = |k|_{(p)}^{2\epsilon}.
\ee
Solving the rest of the flow equations at $|\cdot|_0$ and on the Archimedean branch, and demanding that the flow should increase monotonically toward $(\mathbb{R},|\cdot|_{(\infty)})$, the energy gets specified everywhere on the tree by the flow equation. Thus, the flow together with the values of the energy at the $p$-adic points $\lb \Qp,|\cdot|_{(p)}\rb$ are enough to determine the value of the Archimedean energy.

\begin{comm}
For the space $\MM\lb\mathbb{Z}\rb$ (as for graphs), the distinction between first order and second order differential operators is not sharp. Thus, one could also consider other flow equations, based on first order operators, which are quantitatively similar to the equations discussed above. For instance, consider the flow equations 
\ba
\lim_{\substack{\epsilon\to0\\\epsilon>0}} \frac{E_{(v)}(\epsilon_0+\epsilon)-E_{(v)}(\epsilon_0)}{\epsilon} &=& \lambda_{(v)} E_{(v)}\lb \epsilon_0 \rb,\\
\lim_{\epsilon\to0} \sum_{v} \frac{E_{(v)}\lb\epsilon\rb - E\lb|\cdot|_0\rb}{\epsilon} &=& 0,
\ea
on the branches and at $|\cdot|_0$ respectively. These equations also uniquely determine the flow \eqref{eqsol38}, and thus the Archimedean eigenenergy, given the $p$-adic eigenenergies as boundary conditions. Furthermore, using first order equations does not need assumptions on the direction of increase of the flow along the branches.
\end{comm}

\begin{comm}
The form of the flow equation for the eigenenergies is strongly dependent on the fact that this is a free theory in quantum mechanics. In fact, the Berkovich equation of motion is the same as the free particle eigenequation, however it is unclear if this has a deeper meaning. For more complicated quantities in field theory that flow nontrivially with the energy scale, one should of course expect other Berkovich equations of motion.
\end{comm}

\subsection{Toward other renormalization group flows}
\label{secdisc}

We will end this section with a few comments. First, we should note that as explained in the introduction above, there are many physical quantities which obey Euler product formulas, such as the four-point Virasoro amplitudes for open strings at tree level. The  equations of flow in Berkovich space discussed above should apply to all such objects. Furthermore, the Berkovich space provides a natural setting for reconstructing (in the sense of \cite{Stoica:2018zmi}) more complicated Archimedean objects which don't satisfy product rules, such as five-point and higher point amplitudes. This is because theories along the branches of the Berkovich deform continuously from one to the other, so any physical quantities in these theories will also deform continuously. Thus, any physical quantity will obey an equation of motion in Berkovich space, however for more complicated objects these equations of motion will not lead to simple product formulas. We leave the study of such more general Berkovich equations of motion to future work (though we will discuss below the $p\to1$ procedure).

\begin{figure}[htp]
\centering
\includegraphics[scale=0.5]{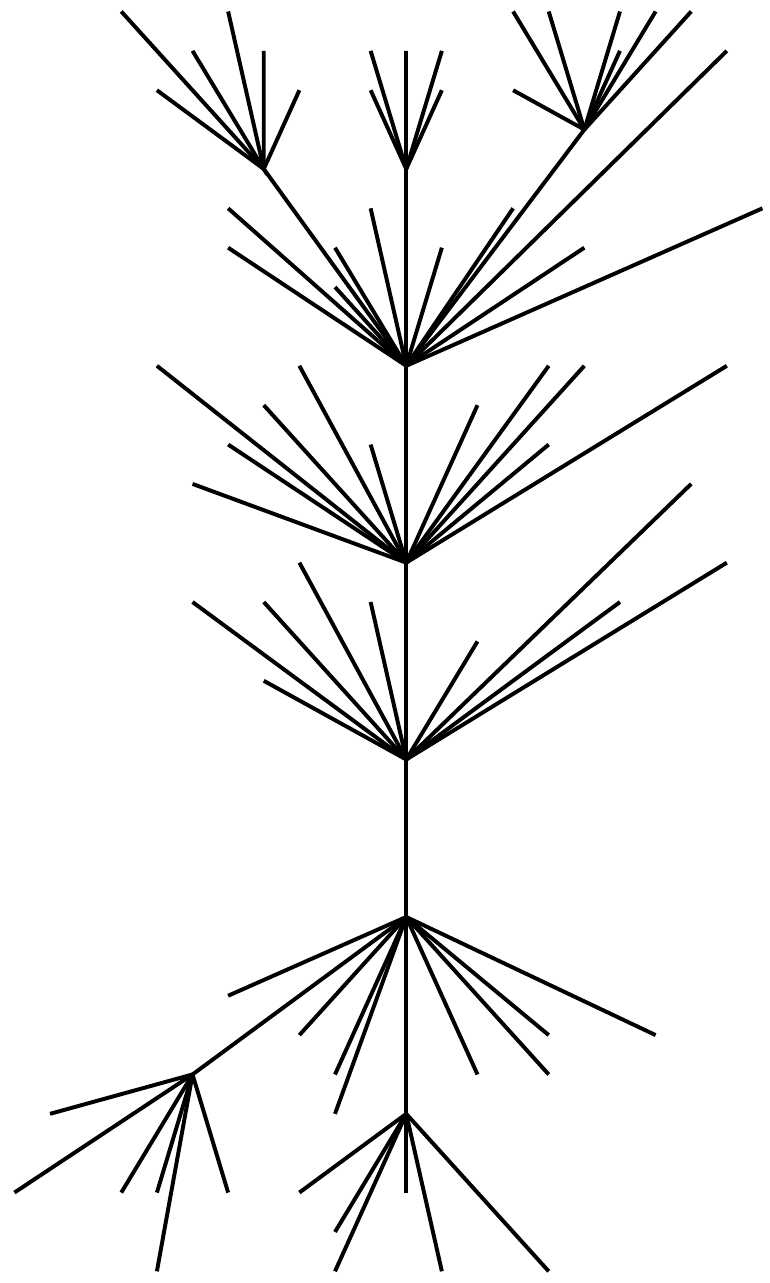}
\caption{A schematic representation of the Berkovich projective line $\mathbb{P}_\mrm{Berk}^1$.}
\label{fig3}
\end{figure}

Let's now comment on the integer restriction. In the analysis above the energies obey $E\in\mathbb{Z}$. If instead $E\in\mathbb{Q}-\mathbb{Z}$, then the discussion becomes plagued by ultraviolet divergences, as $E$ will diverge when $\epsilon\to\infty$ on the $p$-adic branches. This signals that the $p$-adic ultraviolet completion breaks down. However, the restriction to integers is in some sense unphysical, as it can easily be relaxed, for instance by restoring $m$, $h$ and $T$. Thus, there must exist a generalization of our discussion that can incorporate energies (and other quantities) that are $\mathbb{Q}$-valued. This generalization is given by considering instead the Berkovich projective line $\mathbb{P}_\mrm{Berk}^1$. We will not review the Berkovich projective line here, see for instance \cite{baker,jonsson}, however we will give a brief description. The Berkovich projective line $\mathbb{P}^1_\mrm{Berk}$ can be thought of as a central vertex, out of which an infinite number of branches emanate (see Figure \ref{fig3}). However, unlike in the case $\MM\lb\mathcal{Z}\rb$, infinitely many branches can now emanate from certain points on the branches also. The Berkovich projective line is colloquially known as a witch's broom.

We now relate the Berkovich space with the $p\to1$ limit. As we have mentioned above, the $p\to1$ limit can be used to obtain an Archimedean tachyonic Lagrangian with logarithmic potential from an effective $p$-adic Lagrangian \cite{gershata}. Recently, in \cite{Bocardo-Gaspar:2017atv,bggczg} it was established that the four-point and five-point scattering amplitudes for this Archimedean Lagrangian follow from the corresponding $p$-adic amplitudes, by taking the $p\to1$ limit in a controlled manner. The first step in the procedure of \cite{Bocardo-Gaspar:2017atv,bggczg} is to pass from $\Qp$ to an unramified field extension, which can be understood as moving in the Berkovich projective line. The limit $e\to0$ can also be given the interpretation of Berkovich flow, thus we propose that an interpretation of the $p\to1$ procedure is that it is another way of keeping track how physical quantities change along different directions in Berkovich space. If this interpretation is correct, the two ways of reconstructing Archimedean quantities (Euler products and the $p \to1$ limit) can both be understood in the Berkovich space framework. We defer a more detailed analysis to future work. 

\section{Discussion}
\label{sec4}

Let's now end with a mathematical comment. The philosophy of flowing in Berkovich space described in this paper is reminiscent of how general covariance functions in general relativity, if one identifies different coordinate systems with places. That is, in general relativity one can write down covariant statements that take the same form in all coordinate systems. The reason such statements are coordinate invariant is because they remain unchanged under a coordinate transformation that deforms from one coordinate system to another. This is similar to what we have have been discussing in this paper: one can write down statements (such as the Schrodinger equations \eqref{hereisschrodi}, \eqref{hereisschrodiarchi}, or other formulas such as the integral representations of $n$-point Veneziano amplitudes at tree level) that take the same form at $\mathbb{R}$ and $\Qp$. A priori it may seem surprising that the same expressions should hold both for $\mathbb{R}$ and $\Qp$, however if it is possible to deform from one to the other along paths in Berkovich space, such that the expressions remain invariant (similarly to an infinitesimal coordinate transformation) then that the same expressions apply across places is not surprising, but instead it is \emph{natural}. Of course, the invariance under infinitesimal deformations in Berkovich space will not apply to all quantities, just as not all expressions are covariant in general relativity. It would be interesting to systematically classify which objects obey this.

\section*{Acknowledgments}

We thank H. Hampapura, C. Jepsen, M. M. Nastasescu, Dingxin Zhang, and  W. A. Z\'u\~niga-Galindo for useful discussions. The work of A.H. was supported in part by a grant from the Simons Foundation in Homological Mirror Symmetry. The work of A.H., D.M., and B.S. was supported in part by a grant from the Brandeis University Provost Office. B.S. was supported in part by the U.S. Department of Energy under grant DE-SC-0009987, and by the Simons Foundation through the It from Qubit Simons Collaboration on Quantum Fields, Gravity and Information.

\appendix

\section{Vladimirov derivatives}
\label{appVladDer}

This appendix contains some results on Vladimirov derivatives. These results are written informally, in the style of physicists, and are included for completeness, as they are mostly not needed for the main content of the paper. See \cite{HSYZ} for a rigorous discussion of Vladimirov derivatives with nontrivial characters.

\subsection{Basics}
The standard literature on Vladimirov derivatives (see e.g. \cite{vvzbook,zunigabook}, and \cite{GGIP} for an introduction to the theory of characters on $\Qp$) contains two flavors for the Vladimirov derivative: (i) unregularized, and (ii) regularized. This is because the integrals required for the calculation of the derivative typically diverge when applied to certain classes of functions, such as polynomials, or even constants. To regularize these divergences, the regularized version is sometimes used.

\begin{defn}
Let $s\in \mathbb{R},\ \tau\in \Qp$. The position space Vladimirov derivative $\pd^{s,\tau}$ acting on a function $\psi(x)$ is associated to a multiplicative character $\pi_{s,\tau}$, and is defined as follows:
\begin{enumerate}
\item Take the Fourier transform,
\be
\label{eq21}
\psi(k) = \int \psi(x) \chi(kx).
\ee
\item Apply the multiplicative character,
\be
\label{eq22}
\pd^{s,\tau} \psi(k) = \pi_{s,\tau}(k) \psi(k). 
\ee
\item Fourier transform back,
\be
\label{eq23}
\pd^{s,\tau} \psi(x) = \int \chi(-kx) \pd^{s,\tau} \psi(k).
\ee
\end{enumerate}
\end{defn}

\begin{lemma}
For $\pi_{s,\tau}\neq \pi_{-1,1}$ and $\pi_{s,\tau}\neq \pi_{0,1}$, the Vladimirov derivative has a position space representation given by
\be
\label{eq24}
\pd^{s,\tau} \psi(x) = \Gamma\lb \pi_{s+1,\tau} \rb  \int \frac{ \psi(x') \sgn_\tau\lb x'-x \rb}{|x'-x|^{s+1}} .
\ee
\end{lemma}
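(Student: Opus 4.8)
The plan is to unfold the three-step definition \eqref{eq21}--\eqref{eq23} into a single integral over momentum, interchange the order of integration, and recognise the inner integral as the Fourier transform of the multiplicative character $\pi_{s,\tau}$. Concretely, composing \eqref{eq21}--\eqref{eq23} gives
\be
\pd^{s,\tau}\psi(x) = \int \chi(-kx)\,\pi_{s,\tau}(k) \lb \int \psi(x')\,\chi(kx')\,dx' \rb dk = \int \psi(x') \lsb \int \pi_{s,\tau}(k)\,\chi\lb k(x'-x) \rb dk \rsb dx',
\ee
so the statement reduces to evaluating $\int \pi_{s,\tau}(k)\,\chi(ky)\,dk$ at $y=x'-x$: the kernel of the Vladimirov derivative is nothing but the Fourier transform of the multiplicative character.

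For that evaluation I would invoke the Fourier transform formula for multiplicative characters already used in the proof of the Lemma that establishes \eqref{eqimm1}, namely $\int \pi_{a,\tau}(k)\,\chi(ky)\,dk = \Gamma\lb\pi_{a+1,\tau}\rb\,\pi_{-a-1,\tau}(y)$, which is the content of the $p$-adic local functional equation and which one may take as the definition of the Gamma factor $\Gamma\lb\pi_{a+1,\tau}\rb$. Setting $a=s$,
\be
\int \pi_{s,\tau}(k)\,\chi\lb k(x'-x) \rb dk = \Gamma\lb \pi_{s+1,\tau} \rb \pi_{-s-1,\tau}(x'-x) = \Gamma\lb \pi_{s+1,\tau} \rb \frac{\sgn_\tau(x'-x)}{|x'-x|^{s+1}},
\ee
and substituting this back into the display above reproduces \eqref{eq24} verbatim. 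The two excluded characters enter exactly here: $\pi_{s,\tau}=\pi_{0,1}$ forces $\pi_{s+1,\tau}=\pi_{1,1}$, a zero of the Gamma factor, and $\pi_{s,\tau}=\pi_{-1,1}$ forces $\pi_{s+1,\tau}=\pi_{0,1}$, a pole; in either case the Fourier transform of $\pi_{s,\tau}$ ceases to be a pure power law (it acquires a $\delta$-function or a logarithmic term) and \eqref{eq24} must be modified.

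The main obstacle is that, taken literally, none of the manipulations above is legal for a general $\psi$: $\pi_{s,\tau}(k)$ is not integrable, so its Fourier transform exists only as a Bruhat--Schwartz distribution, and both the interchange of the $k$- and $x'$-integrals and the passage from the convolution kernel to the singular integral in \eqref{eq24} must be understood in the distributional / principal-value sense. This is precisely what distinguishes \eqref{eq24} from the manifestly convergent formula \eqref{eqDstau}, whose kernel acts on $\psi(x')-\psi(x)$ rather than $\psi(x')$. I would therefore restrict to Bruhat--Schwartz test functions $\psi$, run the computation above as an identity of distributional pairings, and cite \cite{HSYZ} for the fully rigorous treatment; a minor bookkeeping point along the way is tracking the factor $\sgn_\tau(-1)$ that relates $\sgn_\tau(x-x')$ and $\sgn_\tau(x'-x)$, keeping it consistent with the sign conventions fixed earlier in the paper. (One could also bypass the direct computation and simply quote the main-text identity \eqref{eqDstau}, observing that \eqref{eq24} is what remains once the regularizing $\psi(x)$ term is dropped; the derivation sketched here has the advantage of being self-contained within this appendix.)
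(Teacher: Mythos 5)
Your proposal is correct and follows essentially the same route as the paper: the paper's proof is the one-line instruction to perform the $k$ integral in the unfolded definition \eqref{eq21}--\eqref{eq23}, which amounts to exactly the evaluation $\int \pi_{s,\tau}(k)\chi\lb k(x'-x)\rb dk=\Gamma\lb\pi_{s+1,\tau}\rb\pi_{-s-1,\tau}(x'-x)$ that you carry out. Your added remarks on the distributional caveats and on why $\pi_{0,1}$ and $\pi_{-1,1}$ are excluded go beyond the paper's terse proof but are consistent with its conventions.
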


\begin{proof}
Perform the $k$ integral, using the integral representation of the Dirac-delta function.
\end{proof}

\begin{comm}
Eq. \eqref{eq24} is the unregularized Vladimirov derivative. A more careful treatment of the integrals, in the sense of distributions, produces an extra term (see \cite{HSYZ}), yielding
\be
\label{vladireg}
D^{s,\tau} \psi(x) = \Gamma\lb \pi_{s+1,\tau} \rb  \int \frac{ \lb\psi(x')-\psi(x)\rb \sgn_\tau\lb x'-x \rb}{|x'-x|^{s+1}} .
\ee
Expression \eqref{vladireg} is the regularized Vladimirov derivative.
\end{comm}

\begin{comm}
The distinction between unregularized and regularized Vladimirov derivatives is not important for the purposes of this paper. Furthermore, adopting Beta function regularization, which will be explained below, the extra term proportional to $\psi(x)$ in the regularized Vladimirov derivative vanishes, and the two versions of derivative coincide.
\end{comm}

\begin{lemma}
\label{lemma2}
For $\pi_{s,\tau}=\pi_{0,1}$ the Vladimirov derivative acts as the identity operator,
\be
\pd^{0,1}_x \psi(x) = \psi(x).
\ee
\end{lemma}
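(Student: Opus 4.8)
The plan is to work directly from the momentum-space definition of the Vladimirov derivative in Eqs. \eqref{eq21}--\eqref{eq23}, rather than from the position-space kernel \eqref{eq24}, which is explicitly excluded at the character $\pi_{0,1}$. The one observation needed is that $\pi_{0,1}$ is the constant function $1$: indeed $\pi_{0,1}(k) = |k|^0\,\sgn_1 k$, with $|k|^0 = 1$ for every $k\neq 0$, and $\sgn_1 k = 1$ identically, since $\tau = 1$ is a square in $\Qp^\times$ and hence labels the trivial $p$-adic sign. So step \eqref{eq22} multiplies $\psi(k)$ by $1$ and does nothing.

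First I would record $\pi_{0,1}(k) = 1$ for all $k\neq 0$; the value at the single point $k=0$ is irrelevant since it carries no additive Haar measure. Then the composition in \eqref{eq21}--\eqref{eq23} collapses to $\pd^{0,1}\psi(x) = \FF^{-1}\FF\psi(x)$, and $\pd^{0,1}_x\psi(x) = \psi(x)$ follows from Fourier inversion on $\Qp$, valid on the Schwartz--Bruhat test functions on which the derivative is defined. As a consistency check, acting on an additive character $\chi(kx)$ this reproduces $\pd^{0,1}_x\chi(kx)=\chi(kx)$, which is what one gets by formally setting $s=0$, $\tau=1$ in Eq. \eqref{eqimm1}.

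There is no real obstacle here: unlike the cases discussed in the Comments above, where the naive integrals diverge and a $\psi(x)$ subtraction term is needed to regularize, multiplication by the constant $1$ in momentum space is manifestly well-defined, and its conjugation by the Fourier transform is unambiguous on test functions. The entire content of the lemma is the identification $\pi_{0,1}\equiv 1$ together with Fourier inversion.
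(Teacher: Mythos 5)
Your argument is correct and is essentially the paper's own proof: the paper also works from the momentum-space representation \eqref{hereispider}, observing that $\pi_{0,1}\equiv 1$ so the $k$-integral produces a Dirac delta, which is just Fourier inversion phrased differently. Your added remarks on the measure-zero point $k=0$ and the consistency check against Eq.~\eqref{eqimm1} are fine but not a substantive departure.
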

\begin{proof}
Immediate from Eq. \eqref{hereispider} and the integral representation of the Dirac-delta function.
\end{proof}

\begin{lemma}
For all values of parameters $s_i\in\mathbb{R}$, $\tau_i\in\Qp$, the Vladimirov derivative obeys
\be
\label{28}
\pd^{s_1,\tau_1} \pd^{s_2, \tau_2} = \pd^{s_2,\tau_2} \pd^{s_1, \tau_1} = \pd^{s_1+s_2,\tau_1\tau_2}.
\ee
\end{lemma}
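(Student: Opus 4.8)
The plan is to prove the composition law \eqref{28} by working entirely on the Fourier (momentum) side, where the Vladimirov derivative acts diagonally by multiplication by a multiplicative character. Recall from the definition \eqref{eqDder} that $\pd^{s,\tau} = \FF^{-1}\,\pi_{s,\tau}\,\FF$, so that on a test function $\psi$ one has $\widehat{\pd^{s,\tau}\psi}(k) = \pi_{s,\tau}(k)\,\widehat\psi(k)$. The core of the argument is therefore just the statement that multiplying successively by two multiplicative characters is the same as multiplying by their product, together with the elementary fact that the product of multiplicative characters is again a multiplicative character of the expected type.

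First I would write
\be
\pd^{s_1,\tau_1}\pd^{s_2,\tau_2}\psi
  = \FF^{-1}\,\pi_{s_1,\tau_1}\,\FF\,\FF^{-1}\,\pi_{s_2,\tau_2}\,\FF\,\psi
  = \FF^{-1}\,\pi_{s_1,\tau_1}\,\pi_{s_2,\tau_2}\,\FF\,\psi,
\ee
using $\FF\FF^{-1} = \mathrm{id}$ on the relevant space of test functions. Then I would invoke the pointwise identity
\be
\pi_{s_1,\tau_1}(k)\,\pi_{s_2,\tau_2}(k)
  = |k|^{s_1}\sgn_{\tau_1}(k)\,|k|^{s_2}\sgn_{\tau_2}(k)
  = |k|^{s_1+s_2}\,\sgn_{\tau_1\tau_2}(k)
  = \pi_{s_1+s_2,\tau_1\tau_2}(k),
\ee
where the middle equality uses the multiplicativity of the $p$-adic sign function (Hilbert symbol) in its parameter, $\sgn_{\tau_1}(k)\sgn_{\tau_2}(k) = \sgn_{\tau_1\tau_2}(k)$. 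Feeding this back gives $\pd^{s_1,\tau_1}\pd^{s_2,\tau_2} = \FF^{-1}\,\pi_{s_1+s_2,\tau_1\tau_2}\,\FF = \pd^{s_1+s_2,\tau_1\tau_2}$. Commutativity is then immediate, since multiplication of functions on $\Qp$ is commutative, so $\pi_{s_1,\tau_1}\pi_{s_2,\tau_2} = \pi_{s_2,\tau_2}\pi_{s_1,\tau_1}$ and the two orders of composition agree.

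The main obstacle is not the algebra but the analysis: one must specify a class of functions on which all the manipulations are legitimate, i.e. on which $\FF$ is invertible with $\FF^{-1}\FF = \FF\FF^{-1} = \mathrm{id}$ and on which the intermediate products $\pi_{s,\tau}\cdot\FF\psi$ stay inside that class so the next Fourier transform is defined. The natural choice is the space of Bruhat--Schwartz (locally constant, compactly supported) test functions, or more precisely its image is better suited here, and one should note that for polynomial-type functions the integrals require the regularization discussed in Appendix \ref{appVladDer}; adopting that regularization (Beta-function regularization), the diagonal action on the Fourier side and hence \eqref{28} continue to hold. I would also flag the excluded characters: the identity is stated for all $s_i,\tau_i$, but if an intermediate or final character equals $\pi_{0,1}$ or $\pi_{-1,1}$ one should use Lemma \ref{lemma2} (and the analogous special behavior at $\pi_{-1,1}$) rather than the generic position-space formula \eqref{eq24} — on the Fourier side, however, the multiplication-by-character description is uniform and these cases cause no trouble, which is precisely why the momentum-space proof is the clean one to present.
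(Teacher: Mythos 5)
Your proof is correct and follows essentially the same route as the paper: the paper also works on the momentum side, writing the composition as a double integral against $\pi_{s_1,\tau_1}(k_1)\pi_{s_2,\tau_2}(k_2)$ and using the delta function from the $x_1$ integral to set $k_1=k_2$, which is just the explicit integral form of your operator identity $\FF^{-1}\pi_{s_1,\tau_1}\FF\,\FF^{-1}\pi_{s_2,\tau_2}\FF=\FF^{-1}\pi_{s_1+s_2,\tau_1\tau_2}\FF$. Your remarks on the special characters $\pi_{0,1}$, $\pi_{-1,1}$ match the paper's closing comment that in those cases the identity holds formally (or as the identity operator) even though no position-space representation exists.
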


\begin{proof}
Use the momentum space representation,
\ba
\pd^{{s_2},\tau_2} \psi(x) &=& \int \pi_{{s_2},\tau_2}(k_2) \chi\lsb k_2\lb x_2-x \rb \rsb \psi(x_2), \\
\pd^{{s_1},\tau_1} \pd^{{s_2},\tau_2} \psi(x) &=& \int \pi_{{s_1},\tau_1}(k_1) \pi_{{s_2},\tau_2}(k_2) \chi\lsb k_1\lb x_1 - x \rb + k_2\lb x_2 - x_1 \rb  \rsb \psi(x_2).
\ea
The $x_1$ integral gives a delta function, which sets $k_1=k_2$, so that
\ba
\pd^{{s_1},\tau_1} \pd^{{s_2},\tau_2} \psi(x) &=& \int \pi_{{s_1+s_2},\tau_1\tau_2}(k) \chi\lsb k\lb x' - x \rb   \rsb \psi(x') \\
&=& \pd^{s_1+s_2,\tau_1\tau_2} \psi(x).
\ea
\noindent For $s_1+s_2=0$, $\tau_1\tau_2=1$, the derivative on the right-hand side in Eq. \eqref{28} is the identity operator. For $s_1+s_2=-1$, $\tau_1\tau_2=1$, the derivative on the right-hand side doesn't have a position space representation, however Eq. \eqref{28} still holds formally.
\end{proof}

\begin{lemma}
For all $s\in\mathbb{R}$, $\tau\in\Qp$, constants can be pulled in front of the Vladimirov derivative,
\be
\pd^{s,\tau}_{x} c \psi (x) = c \pd^{s,\tau}_x \psi(x).
\ee
\end{lemma}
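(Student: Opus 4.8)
The plan is to work directly from the momentum-space definition of the Vladimirov derivative, Equations \eqref{eq21}--\eqref{eq23} (equivalently, the explicit form \eqref{hereispider}), and simply observe that each of the three operations---Fourier transform, multiplication by $\pi_{s,\tau}(k)$, and inverse Fourier transform---is $\mathbb{C}$-linear, hence so is their composition. Concretely, first I would write $\pd^{s,\tau}_x c\psi(x) = \int \pi_{s,\tau}(k)\, c\psi(x')\, e^{2\pi i k(x'-x)}\, dx'\,dk$ using \eqref{hereispider}; then pull the constant $c$ out of the $x'$-integral, since integration against the fixed kernel $e^{2\pi i k(x'-x)}$ is linear in the integrand; and finally pull $c$ out of the $k$-integral as well, yielding $c\int \pi_{s,\tau}(k)\,\psi(x')\,e^{2\pi i k(x'-x)}\,dx'\,dk = c\,\pd^{s,\tau}_x \psi(x)$.

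Since the paper allows either the unregularized form \eqref{eq24} or the regularized form \eqref{vladireg}, I would note that linearity holds for both: in \eqref{vladireg} the combination $c\psi(x') - c\psi(x) = c(\psi(x')-\psi(x))$ is manifestly $c$ times the original numerator, so the same factoring goes through. Thus the statement is independent of the regularization convention, which is worth remarking since the lemma is stated for all $s\in\mathbb{R}$, $\tau\in\Qp$ including the exceptional characters $\pi_{0,1}$ and $\pi_{-1,1}$, where one falls back on the momentum-space definition rather than the position-space integral.

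There is essentially no obstacle here: the result is a one-line consequence of the fact that $\pd^{s,\tau}$ is defined as a composition $\FF^{-1}\circ(\text{mult.\ by }\pi_{s,\tau})\circ\FF$ of linear operators. The only thing to be mildly careful about is that we are pulling constants through improper integrals, so strictly one should restrict to the class of test functions $\psi$ on which the defining integrals converge (or are interpreted distributionally, as in \cite{HSYZ}); on that class the manipulations are valid term by term. I would therefore present the proof in two sentences: invoke \eqref{eqDder} (or \eqref{hereispider}), note linearity of $\FF$, $\FF^{-1}$, and of multiplication by a fixed function, and conclude. In the interest of matching the terse style of the surrounding lemmas (cf.\ ``\emph{Immediate.}''), the write-up can simply read: ``Immediate from the momentum-space representation \eqref{eq22}, since $\FF$, $\FF^{-1}$, and multiplication by $\pi_{s,\tau}(k)$ are all linear.''
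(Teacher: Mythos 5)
Your proof is correct, and its essence---linearity of the operations defining $\pd^{s,\tau}$---is the same as the paper's. The only organizational difference is that you argue uniformly from the momentum-space definition \eqref{eqDder}/\eqref{hereispider}, so the exceptional characters are covered automatically, whereas the paper pulls $c$ out of the position-space integral \eqref{eq24} and then handles the $s=-1$, $\tau=1$ case separately by falling back on the momentum-space representation; your route avoids that case-split, and your extra remark that the regularized form \eqref{vladireg} is equally linear (since $c\psi(x')-c\psi(x)=c\,(\psi(x')-\psi(x))$) is a harmless addition the paper does not spell out. Either write-up is acceptable; yours is marginally more uniform, the paper's is marginally closer to the position-space formulas used in the surrounding computations.
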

\begin{proof}
If a position representation exists,
\be
\pd^{s,\tau}_{x} c \psi (x) =  c \Gamma\lb \pi_{s+1,\tau} \rb  \int \frac{\psi(x') \sgn_\tau\lb x'-x \rb}{|x'-x|^{s+1}} = c \pd^{s,\tau}_{x} \psi (x).
\ee
If we're in the $s=-1,\tau=1$ case, this still follows from the momentum space representation.
\end{proof}

\begin{fact}
The Vladimirov derivative behaves under translations as
\be
\pd_x^{s,\tau} \psi(x-a) = \pd_{x'}^{s,\tau} \psi(x') \Big|_{x'=x-a}.
\ee
\end{fact}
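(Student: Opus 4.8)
\emph{Proof proposal.} The plan is to read off translation covariance directly from the position space representation \eqref{eq24}, where it reduces to the translation invariance of the additive Haar measure on $\Qp$, and then to dispatch the two exceptional characters $\pi_{0,1}$ and $\pi_{-1,1}$ separately using the momentum space definition \eqref{eq21}--\eqref{eq23}.

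First I would fix $a\in\Qp$ and set $\phi(x)\coloneqq\psi(x-a)$. Assuming $\pi_{s,\tau}\neq\pi_{0,1},\pi_{-1,1}$, applying \eqref{eq24} to $\phi$ gives
\be
\pd^{s,\tau}_x\phi(x)=\Gamma\lb\pi_{s+1,\tau}\rb\int\frac{\psi\lb x'-a\rb\,\sgn_\tau\lb x'-x\rb}{|x'-x|^{s+1}}\,dx'.
\ee
I then substitute $u=x'-a$; since the additive Haar measure on $\Qp$ is translation invariant one has $du=dx'$, while $x'-x=u-(x-a)$, so the right-hand side becomes
\be
\Gamma\lb\pi_{s+1,\tau}\rb\int\frac{\psi(u)\,\sgn_\tau\lb u-(x-a)\rb}{|u-(x-a)|^{s+1}}\,du=\pd^{s,\tau}_{x'}\psi(x')\Big|_{x'=x-a},
\ee
which is the claimed identity. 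The same substitution goes through verbatim for the regularized derivative \eqref{vladireg}: the subtracted term is $\phi(x)=\psi(x-a)$, so after the change of variables it becomes exactly the value $\psi(x-a)$ appearing in $\pd^{s,\tau}_{x'}\psi(x')\big|_{x'=x-a}$, and the two versions transform identically, consistent with the earlier comment that their difference is immaterial here.

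For the exceptional characters $\pi_{0,1}$ (where $\pd^{0,1}$ is the identity by Lemma \ref{lemma2}, so the statement is trivial) and $\pi_{-1,1}$ (where no position space kernel exists), I would argue on the Fourier side. A change of variables in \eqref{eq21} shows that the Fourier transform of $\phi(x)=\psi(x-a)$ equals $\chi(ka)$ times that of $\psi$; multiplying by $\pi_{s,\tau}(k)$ as in \eqref{eq22} commutes with this phase, and transforming back via \eqref{eq23} absorbs $\chi(ka)\chi(-kx)=\chi\lsb -k(x-a)\rsb$, yielding $\pd^{s,\tau}\psi(x')\big|_{x'=x-a}$. I expect no substantive obstacle: the whole content is that translation is implemented on momentum space by a multiplicative phase that commutes with multiplication by $\pi_{s,\tau}$, and on position space by invariance of Haar measure; the only point requiring care is covering every character, which is why the exceptional cases are treated on their own.
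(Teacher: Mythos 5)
Your proposal is correct. Note that the paper states this translation property only as a Fact, with no accompanying ``Why'' or proof, so there is no argument of the paper to compare against; your write-up supplies the missing justification, and both halves check out: the substitution $u=x'-a$ in the position-space kernel uses only translation invariance of the additive Haar measure on $\Qp$, and it works identically for the regularized form \eqref{vladireg} since the subtracted term transports to $\psi(x-a)$ as you say. The one remark worth making is that your Fourier-side argument is not really an ``exceptional-case'' device: since the transform of $\psi(x-a)$ is $\chi(ka)$ times that of $\psi$, and this phase commutes with multiplication by $\pi_{s,\tau}(k)$ in \eqref{eq22}, the momentum-space computation proves the Fact uniformly for every character, including $\pi_{0,1}$ and $\pi_{-1,1}$, so the case split (and indeed the position-space route altogether) is optional rather than necessary. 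What the position-space version buys you is that it does not presuppose the interchange of the two Fourier integrals implicit in \eqref{eqDder}, so keeping both arguments is reasonable, but either one alone would suffice.
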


Let's now introduce the Beta function, which will be needed for discussing regularization below.
\begin{defn}
The Beta function of two multiplicative characters $\pi_{1,2}:\Qp\to\mathbb{C}^\times$ is
\be
\label{B113}
\Beta(\pi_1,\pi_2) = \int \pi_1(x)\pi_2(1-x)|x|^{-1}|1-x|^{-1}.
\ee
\end{defn}
Integral \eqref{B113} can be split into integrals on $\Zp$ and $\Qp$. Just as in the definition of the Gamma function, for generic characters one of the integrals converges, the other does not, and each of the integrals can be analytically continued to a finite answer.

Integral \eqref{B113} evaluates to \cite{GGIP}
\be
\label{B114}
\Beta(\pi_1,\pi_2) = \frac{\Gamma(\pi_1)\Gamma(\pi_2)}{\Gamma(\pi_1\pi_2)},
\ee
so the result can be recast as
\be
\label{B115}
\int \pi_{s_1-1,\tau_1}(x) \pi_{s_2-1,\tau_2}(1-x) = \frac{\Gamma(\pi_{s_1,\tau_1})\Gamma(\pi_{s_2,\tau_2})}{\Gamma(\pi_{s_1+s_2,\tau_1\tau_2})},
\ee
whenever the expression is non-singular.

\subsection{Regularization}

We now discuss the regularized Vladimirov derivative, which in Eq. \eqref{vladireg} was defined~as
\be
D^{s,\tau} \psi(x) \coloneqq \Gamma\lb \pi_{s+1,\tau} \rb  \int \frac{ \psi(x') - \psi(x) }{|x'-x|^{s+1}} \sgn_\tau\lb x'-x \rb.
\ee

\begin{lemma}
The regularized Vladimirov derivative obeys
\be
D^{s_1,\tau_1}D^{s_2,\tau_2} = D^{s_2,\tau_2}D^{s_1,\tau_1}.
\ee
\end{lemma}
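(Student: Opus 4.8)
The plan is to mimic, as closely as possible, the proof of the analogous commutation relation \eqref{28} for the unregularized Vladimirov derivative, being careful about the extra subtraction term. First I would pass to the momentum-space representation of the regularized derivative. Writing $D^{s,\tau}\psi$ via its Fourier transform, the subtraction of $\psi(x)$ in position space corresponds to subtracting the value of the character multiplier at zero momentum (i.e. to the prescription that the $k=0$ contribution is removed/regularized); concretely, for generic $\pi_{s,\tau}$ the regularized and unregularized derivatives have the \emph{same} action on any additive character $\chi(kx)$ with $k\neq 0$, namely multiplication by $\sgn_\tau(-1)\pi_{s,\tau}(k)$ as in Lemma with Eq.~\eqref{eqimm1}, and they differ only in how the $k=0$ mode is handled. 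So the natural strategy is: expand $\psi$ in additive characters, apply each derivative mode by mode, and observe that on each nonzero mode the two orders of composition agree because $\pi_{s_1,\tau_1}(k)\pi_{s_2,\tau_2}(k)=\pi_{s_2,\tau_2}(k)\pi_{s_1,\tau_1}(k)$ as ordinary multiplication of complex numbers.

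The key steps, in order, would be: (i) record that for $k\neq0$, $D^{s,\tau}\chi(kx)=\sgn_\tau(-1)\pi_{s,\tau}(k)\chi(kx)$, and that $D^{s,\tau}$ annihilates constants (the $k=0$ mode), which is exactly the point of the regularization — the subtraction $\psi(x')-\psi(x)$ kills the constant part of $\psi$; (ii) therefore on the ``reduced'' function $\tilde\psi(x)\coloneqq\psi(x)-\psi(x_\ast)$ (or more invariantly, on the span of nonconstant characters) both $D^{s_1,\tau_1}D^{s_2,\tau_2}$ and $D^{s_2,\tau_2}D^{s_1,\tau_1}$ act as multiplication in momentum space by $\pi_{s_1,\tau_1}(k)\pi_{s_2,\tau_2}(k)=\pi_{s_1+s_2,\tau_1\tau_2}(k)$; (iii) since scalar multiplication in $\mathbb{C}$ is commutative, the two compositions agree mode by mode, hence agree as operators. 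One then notes the composition equals $D^{s_1+s_2,\tau_1\tau_2}$ when $\pi_{s_1+s_2,\tau_1\tau_2}\neq\pi_{0,1},\pi_{-1,1}$, and in the exceptional cases the identity still holds (the middle expression is symmetric in the two labels regardless), paralleling the closing remarks in the proof of \eqref{28}.

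Alternatively — and this is perhaps cleaner to write — one can work directly from the double-integral position-space form. Writing out $D^{s_1,\tau_1}D^{s_2,\tau_2}\psi(x)$ as an iterated integral over $x'$ and $x''$ of the kernel
\[
\Gamma(\pi_{s_1+1,\tau_1})\Gamma(\pi_{s_2+1,\tau_2})\,\frac{\sgn_{\tau_1}(x''-x)}{|x''-x|^{s_1+1}}\cdot\frac{\bigl(\psi(x')-\psi(x'')\bigr)\sgn_{\tau_2}(x'-x'')}{|x'-x''|^{s_2+1}},
\]
one changes variables (shift by $x$, as in the proof of Lemma with Eq.~\eqref{eqimm1}) to expose the convolution structure, does the intermediate integral, and checks the resulting kernel is symmetric under swapping $(s_1,\tau_1)\leftrightarrow(s_2,\tau_2)$; the $\Gamma$-prefactors combine via the functional equation / Beta-function identity \eqref{B114}, which is manifestly symmetric.

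The main obstacle is bookkeeping around the regularizing subtraction term and the singular characters. In the unregularized proof, the step ``the $x_1$ integral gives a delta function'' is clean; with the $\psi(x')-\psi(x)$ subtraction one must check that the extra $-\psi(x)$ pieces either cancel between the two orderings or are separately symmetric, and that no spurious divergence is introduced when the intermediate character hits $\pi_{0,1}$ or $\pi_{-1,1}$. The safest route is to phrase everything in momentum space with the Beta-function regularization already invoked earlier in the appendix (under which, as the preceding comment notes, the extra $\psi(x)$ term vanishes and $D=\pd$), reducing the claim to the already-proved \eqref{28}; the only genuine content then is that scalar multiplication in momentum space commutes, which is trivial.
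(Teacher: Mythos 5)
Your argument is correct at the level of rigor of this appendix, but it is a genuinely different route from the one taken in the paper. The paper's proof is purely formal operator algebra: it writes $D^{s,\tau}\psi(x) = \pd^{s,\tau}\psi(x) - \psi(x)\lb \pd^{s,\tau}1\rb$, treats $\pd^{s,\tau}1$ as a (possibly divergent) constant that can be pulled through derivatives, expands $D^{s_1,\tau_1}D^{s_2,\tau_2}\psi$ into four terms, and observes that the result is manifestly symmetric under $1\leftrightarrow2$ once the already-proved commutativity \eqref{28} of the unregularized derivatives is invoked. That argument never needs to know the spectrum of $D^{s,\tau}$, never assigns a value to $\pd^{s,\tau}1$, and so holds formally with the subtraction term kept, independently of any regularization scheme and uniformly in the exceptional characters. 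Your route instead diagonalizes: both orderings act on each nonzero additive character by the scalar $\sgn_{\tau_1}(-1)\sgn_{\tau_2}(-1)\pi_{s_1,\tau_1}(k)\pi_{s_2,\tau_2}(k)$ (essentially the main-text lemma giving \eqref{eqimm1}, whose stated form already uses the subtracted representation), and both annihilate constants, so commutativity follows from commutativity of multiplication in $\mathbb{C}$. This is conceptually cleaner (the operators commute because they are simultaneously diagonalized), but it leans on the eigenvalue relation, which excludes $\pi_{0,1},\pi_{-1,1}$ and, for general $s,\tau$, implicitly uses the same Beta-function regularization that sets $\pd^{s,\tau}1=0$; and your ``safest route'' of invoking $D=\pd$ under Beta regularization, while consistent with the Facts stated later in the appendix, collapses the lemma to a restatement of \eqref{28}, whereas the paper's proof is arranged precisely so that the statement about $D$ does not depend on choosing that scheme. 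So: correct, but keep in mind that the exchange symmetry can be exhibited algebraically without ever evaluating $\pd^{s,\tau}1$ or the action on characters, which is what the paper does.
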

\begin{proof}
We can formally write
\be
D^{s,\tau} \psi(x) = \pd^{s,\tau}\psi(x) - \psi(x)\lb \pd^{s,\tau} 1\rb.
\ee
Then
\ba
D^{s_1,\tau_1} D^{s_2,\tau_2} \psi(x) &=& D^{s_1,\tau_1} \lsb \pd^{s_2,\tau_2}\psi(x) - \psi(x)\lb \pd^{s_2,\tau_2} 1\rb \rsb \\
\label{219}
&=& \pd^{s_1,\tau_1} \pd^{s_2,\tau_2} \psi(x) - \lsb \pd^{s_1,\tau_1} \psi(x) \rsb \lb \pd^{s_2,\tau_2} 1 \rb \\
& & - \lsb \pd^{s_2,\tau_2} \psi(x) \rsb \lb \pd^{s_1,\tau_1} 1 \rb + \psi(x) \lb \pd^{s_1,\tau_1} 1 \rb \lb \pd^{s_2,\tau_2} 1 \rb, \nn
\ea
where we have used that constants can be pulled in front of Vladimirov derivatives. Expression \eqref{219} is manifestly $1\leftrightarrow 2$ exchange symmetric.
\end{proof}

\begin{fact}
The Vladimirov derivative of $1$ vanishes with Beta function regularization, for all nontrivial derivatives.
\end{fact}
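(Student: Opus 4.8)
The plan is to reduce the statement to the evaluation of a single scale-invariant $p$-adic integral, and then to obtain that integral as a degenerate Beta integral whose value is forced to vanish by a zero of the local Gamma factor. First I would set $\psi\equiv 1$ in the unregularized position-space representation \eqref{eq24} (for the regularized $D^{s,\tau}$ the numerator $\psi(x')-\psi(x)$ vanishes identically when $\psi\equiv1$, so there $D^{s,\tau}1=0$ with nothing to prove; the content of the Fact is the same statement for $\pd^{s,\tau}$, whose defining integral diverges and must be assigned a value). The only $x$-dependence is through $x'-x$, so translating the integration variable by $x$ (Haar measure is translation invariant) leaves
\be
\pd^{s,\tau} 1 \;=\; \Gamma\lb\pi_{s+1,\tau}\rb \int_\Qp \frac{\sgn_\tau(u)}{|u|^{s+1}}\,du \;=\; \Gamma\lb\pi_{s+1,\tau}\rb \int_\Qp \pi_{-s-1,\tau}(u)\,du .
\ee
The remaining integral never converges (it would need $\mathrm{Re}(s)<0$ near $u=0$ and $\mathrm{Re}(s)>0$ near $u=\infty$), which is exactly why a regularization is required; ``Beta function regularization'' means we assign it its analytic continuation computed through Eq.~\eqref{B115}.

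To carry this out I would insert the identically-one factor $\pi_{0,1}(1-u)=|1-u|^{0}\,\sgn_1(1-u)=1$ (changing nothing off the measure-zero point $u=1$) and match with Eq.~\eqref{B115} using $s_1=-s$, $\tau_1=\tau$, $s_2=1$, $\tau_2=1$:
\be
\int_\Qp \pi_{-s-1,\tau}(u)\,du \;=\; \int_\Qp \pi_{-s-1,\tau}(u)\,\pi_{0,1}(1-u)\,du \;=\; \frac{\Gamma\lb\pi_{-s,\tau}\rb\,\Gamma\lb\pi_{1,1}\rb}{\Gamma\lb\pi_{1-s,\tau}\rb}.
\ee
The one external input then needed is the computation $\Gamma\lb\pi_{1,1}\rb = (1-p^{0})/(1-p^{-1}) = 0$: the local Gamma factor has a zero at the character $\pi_{1,1}$ (equivalently, the regularized value of $\int_\Qp |u|^{-1}\,du$ is $0$). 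Substituting back gives $\pd^{s,\tau} 1 = \Gamma\lb\pi_{s+1,\tau}\rb\cdot 0 = 0$.

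The point I expect to be the only real obstacle is making sure this product is a genuine zero and not an indeterminate $0\cdot\infty$ or $0/0$. Concretely one must check that on the range of \emph{nontrivial} derivatives --- $\pi_{s,\tau}\neq\pi_{0,1}$ and $\pi_{s,\tau}\neq\pi_{-1,1}$ --- the prefactor $\Gamma\lb\pi_{s+1,\tau}\rb$ is finite and the ratio $\Gamma\lb\pi_{-s,\tau}\rb/\Gamma\lb\pi_{1-s,\tau}\rb$ is finite and nonzero; the potential coincidences (a pole of $\Gamma$ at $\pi_{-s,\tau}=\pi_{0,1}$, a zero of $\Gamma$ in the denominator, or $\pi_{s+1,\tau}=\pi_{0,1}$) each correspond to one of the two excluded characters, so the exclusions in the statement are exactly the right ones. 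A clean way to settle this once and for all is to note that for fixed $\tau$ the character integral $\int_\Qp\pi_{-s-1,\tau}(u)\,du$ is in fact identically zero as a meromorphic function of $s$ --- which one can verify independently by the shell sum $\int_\Qp\pi_{-s-1,\tau}(u)\,du=\sum_{n\in\mathbb{Z}}\int_{|u|=p^{n}}\pi_{-s-1,\tau}(u)\,du$, where the shell integrals vanish outright for ramified $\tau$, and for unramified $\tau$ the sum is a two-sided geometric series whose analytic continuation is $0$ --- so that the only thing left to rule out is $\Gamma\lb\pi_{s+1,\tau}\rb=\infty$, i.e.\ $\pi_{s,\tau}=\pi_{-1,1}$, which is precisely one of the excluded cases.
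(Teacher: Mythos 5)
Your proposal is correct and follows essentially the same route as the paper: the paper regularizes $\pd^{s,\tau}1$ by specializing the Beta-regularized derivative of a multiplicative character \eqref{polyder} (itself obtained from \eqref{B115}) to $s_2=0$, $\tau_2=1$, where the factor $\Gamma\lb\pi_{1,1}\rb=0$ forces the result to vanish — exactly your computation, with your degeneracy check and shell-sum cross-check being welcome extra care rather than a different method.
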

\noindent Why:
\be
\pd^{s,\tau}_x 1 = \int \pi_{-s-1,\tau} (x).
\ee
The right-hand side of the above expression is formally divergent and needs to be regularized. Using result \eqref{polyder}, that will be explained in Section \ref{derimult} below, we~obtain
\be
\label{der1vanish}
\pd^{s,\tau}_x 1 = 0,
\ee
unless $s=0$, $\tau=1$.

\begin{fact}
If we use Beta function regularization, the regularized Vladimirov derivative is the same as the usual Vladimirov derivative. 
\end{fact}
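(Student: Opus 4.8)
The statement to prove is: \emph{If we use Beta function regularization, the regularized Vladimirov derivative is the same as the usual (unregularized) Vladimirov derivative.}

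\medskip

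The plan is to work from the decomposition of the regularized derivative into the unregularized one plus a correction term. By definition, the regularized Vladimirov derivative differs from the unregularized one only by the term proportional to $\psi(x)$ appearing in the integrand: formally,
\be
D^{s,\tau}\psi(x) = \pd^{s,\tau}\psi(x) - \psi(x)\lb \pd^{s,\tau} 1 \rb,
\ee
which is exactly the identity already used in the proof of the commutativity lemma for $D^{s,\tau}$. So the entire content of the statement reduces to showing that the correction term $\psi(x)\lb\pd^{s,\tau}1\rb$ vanishes under Beta function regularization.

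\medskip

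The second and final step is then to invoke the Fact proved just above: the Vladimirov derivative of the constant function $1$ vanishes under Beta function regularization, for all nontrivial characters, i.e.
\be
\pd^{s,\tau}_x 1 = 0 \qquad \text{unless } s=0,\ \tau=1,
\ee
as established via \eqref{der1vanish} using the polynomial-derivative formula \eqref{polyder}. Substituting this into the decomposition kills the correction term, leaving $D^{s,\tau}\psi(x) = \pd^{s,\tau}\psi(x)$ for every nontrivial derivative. For the excluded case $s=0$, $\tau=1$, the derivative acts as the identity operator (Lemma \ref{lemma2}) and the regularized/unregularized distinction is moot, so one simply notes that case separately or excludes it as the other statements in this appendix do.

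\medskip

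I do not expect any real obstacle here — the result is essentially a one-line corollary of the preceding Fact, and the ``proof'' amounts to writing $D^{s,\tau} = \pd^{s,\tau} - (\text{something})\cdot(\pd^{s,\tau}1)$ and observing the subtracted piece is zero. The only point requiring a modicum of care is making sure the formal manipulation $D^{s,\tau}\psi = \pd^{s,\tau}\psi - \psi\,\pd^{s,\tau}1$ is legitimate under the chosen regularization scheme (i.e.\ that splitting the integrand and regularizing termwise is consistent), but since the paper explicitly adopts this formal identity elsewhere in the appendix, it is fair to lean on it here rather than re-justify it.
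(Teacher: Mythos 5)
Your proposal is correct and is essentially the paper's own argument: the paper likewise reduces the claim to the identity $D^{s,\tau}\psi = \pd^{s,\tau}\psi - \psi\,(\pd^{s,\tau}1)$ and invokes the preceding Fact that $\pd^{s,\tau}1=0$ under Beta function regularization (Eq.~\eqref{der1vanish}). Your extra remark about the excluded trivial character $s=0$, $\tau=1$ is a reasonable clarification but does not change the route.
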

\noindent Why: Because $\pd^{s,\tau}1 =0$, from Eq. \eqref{der1vanish} above.

\subsection{The derivative of a multiplicative character}
\label{derimult}

Let's now consider how the Vladimirov derivative acts on multiplicative characters. From the position space representation \eqref{eq24} we have
\be
\pd_x^{s_1,\tau_1} \pi_{s_2,\tau_2}(x) = \Gamma\lb \pi_{s_1+1,\tau_1} \rb  \int \pi_{s_2,\tau_2}(x')\pi_{-s_1-1,\tau_1}(x'-x).
\ee
This expression is generically divergent, so let's use Beta function regularization. Perform a variable change
\be
x'=x''x, \quad dx' = |x|dx'',
\ee
so that
\be
\pd_x^{s_1,\tau_1} \pi_{s_2,\tau_2}(x) = \sgn_{\tau_1}(-1) \Gamma\lb \pi_{s_1+1,\tau_1} \rb \pi_{s_2-s_1,\tau_1\tau_2}(x)  \int \pi_{s_2,\tau_2}(x'')\pi_{-s_1-1,\tau_1}(1-x'').
\ee
Now use Eq. \eqref{B115} and obtain
\be
\pd_x^{s_1,\tau_1} \pi_{s_2,\tau_2}(x) = \sgn_{\tau_1}(-1) \Gamma\lb \pi_{s_1+1,\tau_1} \rb \frac{\Gamma(\pi_{s_2+1,\tau_2})\Gamma(\pi_{-s_1,\tau_1})}{\Gamma(\pi_{s_2-s_1+1,\tau_1\tau_2})} \pi_{s_2-s_1,\tau_1\tau_2}(x).
\ee
Using the Gamma functional equation simplifies this to
\be
\label{polyder}
\pd_x^{s_1,\tau_1} \pi_{s_2,\tau_2}(x) = \frac{\Gamma(\pi_{s_2+1,\tau_2})}{\Gamma(\pi_{s_2-s_1+1,\tau_1\tau_2})} \pi_{s_2-s_1,\tau_1\tau_2}(x).
\ee
Eq. \eqref{polyder} has the following features:
\begin{enumerate}
\item $\pd^{0,1}_x$ acts as the identity operator, as expected from Lemma \ref{lemma2}.
\item Setting $s_2=0$, $\tau_2=1$ gives that the derivative of a constant vanishes for all $\pi_{s_1,\tau_1}\neq \pi_{0,1}$.
\item Acting with $\pd_x^{s_2+1,\tau_2}$ gives zero, unless $s_2=-1$ and $\tau_2=1$.
\item The right-hand side has a pole at $s_2=s_1$, $\tau_1=\tau_2$, unless $s_{1,2}=0$ and $\tau_{1,2}=1$.
\end{enumerate}

For an arbitrary multiplicative character $\pi$, a similar computation shows that
\be
\label{betader}
\pd_x^{s_1,\tau_1} \pi(x) = \frac{\Gamma\lb \pi\pi_{1} \rb}{\Gamma\lb \pi \pi_{-s_1+1,\tau_1} \rb} \pi(x) \pi_{-s_1,\tau_1}(x).
\ee
This is analogous to the Archimedean formula for the usual derivative.

\end{spacing}

\end{document}